\newtheorem{theorem}{\bf Theorem}
\newtheorem{proposition}{\bf Proposition}
\newtheorem{lemma}{\bf Lemma}
\newtheorem{definition}{\bf Definition}
\newtheorem{remark}{Remark}
\newlength{\aligntop}
\newlength{\alignbot}
\renewenvironment{align}{%
\vspace{\aligntop}
\start@align\@ne\st@rredfalse\m@ne
}{%
\math@cr \black@\totwidth@
\egroup
\ifingather@
\restorealignstate@
\egroup
\nonumber
\ifnum0=`{\fi\iffalse}\fi
\else
$$%
\fi
\ignorespacesafterend%
\vspace{\alignbot}\par\noindent
}
\begin{document}
	
\title{Proactive Resource Management for LTE in Unlicensed Spectrum: A Deep Learning Perspective}

\author{
\IEEEauthorblockN{Ursula Challita\IEEEauthorrefmark{1}, Li Dong\IEEEauthorrefmark{1}, and Walid Saad\IEEEauthorrefmark{2}}\\
\IEEEauthorblockA{\IEEEauthorrefmark{1}\small
School of Informatics,
The University of Edinburgh, Edinburgh, UK, \newline Emails: \{ursula.challita, li.dong\}@ed.ac.uk.}\\
%\IEEEauthorblockA{\IEEEauthorrefmark{2}\small
%School of Informatics, The University of Edinburgh, Edinburgh, United Kingdom\\
\IEEEauthorblockA{\IEEEauthorrefmark{2}\small Wireless@VT, Bradley Department of Electrical and Computer Engineering, Virginia Tech, Blacksburg, VA, USA. Email: walids@vt.edu.}\vspace{-1.01cm}
\thanks{A preliminary version of this work was published at European Wireless 2017~\cite{EW_paper}.} \thanks{This research was supported by The University of Edinburgh Principal's Career Development PhD Scholarship and the U.S. National Science Foundation under Grants IIS-1633363 and CNS-1460316.}
}

		%\thanks{This work was supported by the U.S. National Science
		%Foundation under Grants AST-1506297, by the Office of Naval Research (ONR) under Grant N00014-15-1-2709, and, by the ERC Starting
		%Grant 305123 MORE (Advanced Mathematical Tools for Complex Network
     	%Engineering), and by the Academy of Finland.}
	\vspace{-1cm}
\maketitle\vspace{-0.7cm}
\vspace{-0.2cm}
\begin{abstract}\vspace{-0.35cm}
LTE in unlicensed spectrum using licensed assisted access LTE (LTE-LAA) is a promising approach to overcome the wireless spectrum scarcity. However, to reap the benefits of LTE-LAA, a fair coexistence mechanism with other incumbent WiFi deployments is required. In this paper, a novel deep learning approach is proposed for modeling the resource allocation problem of LTE-LAA small base stations (SBSs). The proposed approach enables multiple SBSs to proactively perform dynamic channel selection, carrier aggregation, and fractional spectrum access while guaranteeing fairness with existing WiFi networks and other LTE-LAA operators. Adopting a proactive coexistence mechanism enables future delay-tolerant LTE-LAA data demands to be served within a given prediction window ahead of their actual arrival time thus avoiding the underutilization of the unlicensed spectrum during off-peak hours while maximizing the total served LTE-LAA traffic load. To this end, a noncooperative game model is formulated in which SBSs are modeled as Homo Egualis agents that aim at predicting a sequence of future actions and thus achieving long-term equal weighted fairness with WLAN and other LTE-LAA operators over a given time horizon. The proposed deep learning algorithm is then shown to reach a mixed-strategy Nash equilibrium (NE), when it converges. Simulation results using real data traces show that the proposed scheme can yield up to $28\%$ and $11\%$ gains over a conventional reactive approach and a proportional fair coexistence mechanism, respectively. The results also show that the proposed framework prevents WiFi performance degradation for a densely deployed LTE-LAA network.
\end{abstract}
\vspace{-0.5cm}
%\newpage
\begin{IEEEkeywords}
Licensed assisted access LTE (LTE-LAA); LTE-U; small cell; unlicensed band; long short term memory (LSTM); deep reinforcement learning; game theory; proactive resource allocation
\end{IEEEkeywords}

%\renewcommand{\thefootnote}{\fnsymbol{footnote}}
%\footnotetext[1]{...}
\IEEEpeerreviewmaketitle
%\vspace{-0.5cm}
\section{Introduction}
%\ucc{remove game definition. say that we apply penalty method to change a constrained problem to an unconstrained problem. This is necessary for the stochastic gradient method.}
Licensed assisted access LTE (LTE-LAA) has emerged as an effective solution to overcome the scarcity of the radio spectrum \cite{intro}. Using LTE-LAA, a cellular small base station (SBS) can improve its performance by simultaneously accessing licensed and unlicensed bands. However, to achieve the promised quality-of-service (QoS) improvements from LTE-LAA, many challenges must be addressed ranging from effective co-existence with existing WiFi networks to resource allocation, multiple access, and inter-operator spectrum sharing~\cite{intro}.

If not properly deployed, LTE-LAA can significantly degrade the performance of WiFi~\cite{intro}. There has been a number of recent works~\cite{LBT_no_backoff, LBT_2, operators_mobihoc, mswim, mingzhe, Q_learning_channel, matching_channel, CU_LTE} that study the problem of enhanced LTE-LAA and WiFi coexistence. This existing body of works can be categorized into two groups: channel access~\cite{LBT_no_backoff, LBT_2, operators_mobihoc, mswim} and channel selection~\cite{Q_learning_channel, matching_channel, CU_LTE}. The authors in~\cite{LBT_no_backoff, LBT_2, operators_mobihoc} propose different channel access mechanisms based on listen-before-talk (LBT) that rely on either an exponential backoff~\cite{LBT_no_backoff}, a fixed/random contention window (CW) size~\cite{LBT_2}, or an adaptive CW size~\cite{operators_mobihoc}. Nevertheless, an exponential backoff approach leads to unnecessary retransmissions while a fixed CW size cannot handle time-varying traffic loads thus yielding unfair outcomes. The authors in~\cite{mswim} develop a holistic approach for both traffic offloading and resource sharing across the licensed and unlicensed bands but considering one SBS. In~\cite{mingzhe}, the authors study the problem of resource allocation with uplink-downlink decoupling for LTE-LAA. The authors in~\cite{dyspan} propose an inter-network coordination scheme with a centralized radio resource management for the LTE-WiFi coexistence. However, this prior art is limited to one unlicensed channel and does not jointly account for channel selection and channel access. In other words, these works do not analyze the potential gains that can be obtained upon aggregating or switching between different unlicensed channels. Operating on a fixed unlicensed channel limits the amount of cellular data traffic that can be offloaded to the unlicensed band and leads to an increase in the interference level caused to neighboring WiFi access points (WAPs) operating on that same channel.

In terms of LTE-LAA channel selection, the authors in~\cite{Q_learning_channel} propose a distributed approach based on Q-learning. A matching-based solution approach is proposed in~\cite{matching_channel}, which is both distributed and cooperative. Moreover, the work in~\cite{CU_LTE} combines channel selection along with channel access. Despite the promising results, all of these works~\cite{Q_learning_channel, matching_channel, CU_LTE} consider a reactive approach in which data requests are first initiated and, then, resources are allocated based on their corresponding delay tolerance value. Nevertheless, this sense-and-avoid approach can cause an underutilization of the spectrum due to the impulsive reconfiguration of the spectrum usage that does not account for the future dynamics of the network. Despite the predominance of the reactive LTE-WiFi coexistence solutions, cellular data traffic networks are known to exhibit statistically fluctuating and periodic demand patterns, especially applications such as file transfer, video streaming and browsing~\cite{periodic_traffic}, therefore providing an opportunity for the network to exploit the predictable behavior of the users to smooth out the traffic over time and reduce the difference between the peak and the average load. Therefore, in a \emph{proactive} approach, rather than reactively responding to incoming demands and serving them when requested, an SBS can predict traffic patterns and determine future off-peak times so that incoming traffic demand can be properly allocated over a given time window. %, and thus avoiding congestion.
%in part when requested and shifting the rest to the future.

Therefore, the main motivation for adopting a proactive LTE-WiFi coexistence scheme is to avoid the underutilization of the unlicensed spectrum. This is mainly accomplished by either serving a fraction of the LTE-LAA traffic when requested or shifting part of it to the future, over a given time window, so as to balance the occupancy of the unlicensed spectrum usage across time and, consequently, improve its degree of utilization. From the LTE-LAA network perspective, this will increase its transmission opportunities on the unlicensed spectrum, reduce the collision probability with WAPs and other SBSs and, hence, provide a boost for its throughput. Moreover, a proactive resource allocation scheme can exploit the inherent predictability of the future channel availability status so as to allocate resources in a window of time slots based on the predicted requests. This, in turn, can lead to a decrease in the probability of occurrence of a congestion event while ensuring a degree of fairness to the wireless local area network (WLAN). %causing minimal degradation to WiFi's activity.  patterns undisturbed

The main contribution of this paper is a novel deep
reinforcement learning algorithm based on long short-term memory
(RL-LSTM) cells for proactively allocating LTE-LAA resources over the
unlicensed spectrum. The LTE-LAA resource allocation problem is formulated
as a noncooperative game in which the players are the SBSs. To solve
this game,
we propose an RL-LSTM framework using which the SBSs can autonomously learn
which unlicensed channels to use along with the corresponding channel
access probability on each channel taking into account future
environmental changes, in terms of WLAN activity on the unlicensed
channels and LTE-LAA traffic loads. Unlike previous studies which are
either centralized~\cite{CU_LTE} or rely on the coordination among
SBSs~\cite{operators_mobihoc}, our approach is based on a self-organizing
proactive resource allocation scheme in which the SBSs
utilize past observations of the network state to build predictive models on spectrum
availability and to intelligently plan channel usage over a finite time
window. The use of long short term memory (LSTM) cells enables the SBSs to predict a sequence of interdependent actions over a long-term time horizon thus achieving long-term fairness among different underlying technologies. We show that, upon convergence, the proposed algorithm reaches to a mixed-strategy distribution which constitutes a mixed-strategy Nash equilibrium (NE) for the studied game. We also show that the gain of the proposed proactive resource allocation scheme and the optimal size of the prediction time window is a function of the traffic pattern of the dataset under study. To the best of our knowledge, \emph{this is the first work that exploits the framework of LSTMs for proactive resource allocation in LTE-LAA networks}. Simulation results show that the proposed approach yields significant rate improvements compared to conventional reactive solutions such as instantaneous equal weighted fairness, proportional fairness and total network throughput maximization. The results also show that the proposed scheme prevents disruption to WLAN operation in the case large number of LTE operators selfishly deploy LTE-LAA in the unlicensed spectrum. In terms of priority fairness, results show that an efficient utilization of the unlicensed spectrum is guaranteed when both technologies are given equal weighted priorities for transmission on the unlicensed spectrum.
%a performance improvement, in terms of rate, reaching up to 28\% compared to reactive schemes when evaluated on real traffic data.
%\ucc{We evaluate our proposed proactive scheme using real traffic data and show that it maximizes the long-term utilization of the unlicensed spectrum while minimizing disruptions to WLAN.}

The rest of this paper is organized as follows. In Section II, we present the system model. Section III describes the proposed coexistence game model. The LSTM-based algorithm is proposed in Section IV. In Section V, simulation results are analyzed. Finally, conclusions are drawn in Section VI.
\begin{figure}[t!]
  \begin{center}
  %\hspace*{-3.3cm}
  \centering
  \vspace{-0.4cm}
   \includegraphics[width=18.5cm]{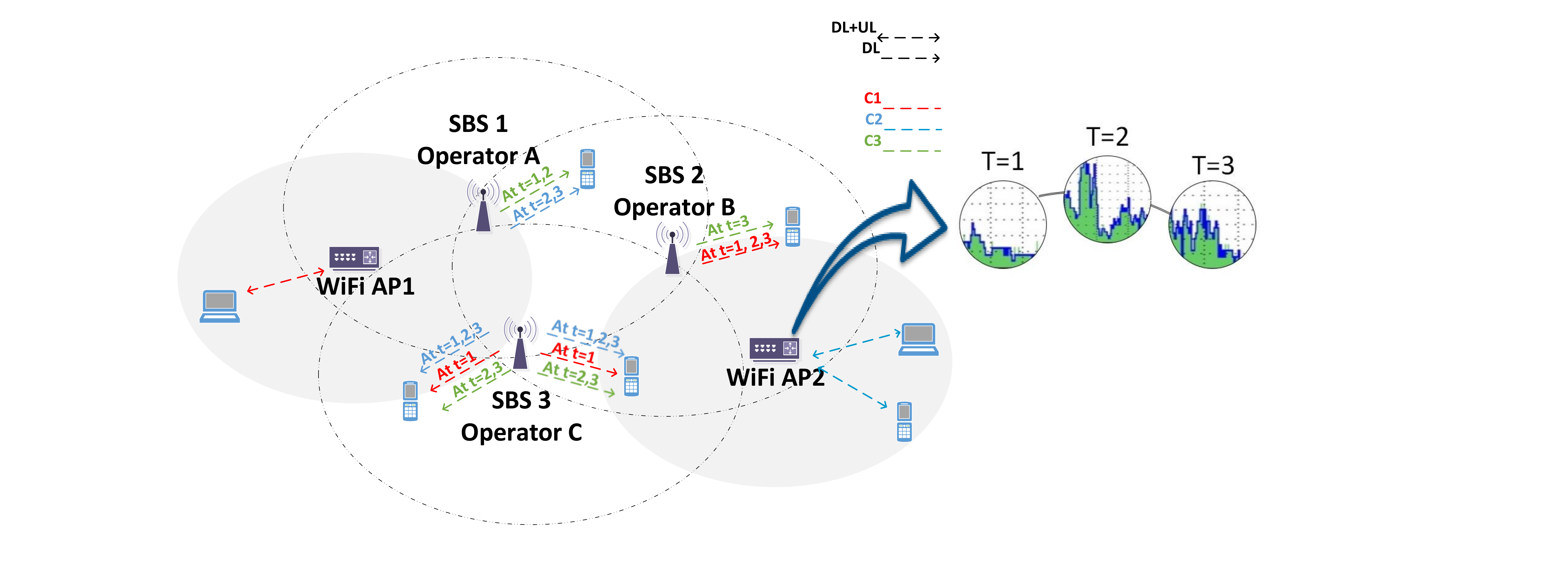} %\hspace*{10cm}
   \vspace{-1.2cm}
   \caption{Illustration of the system model. In the above example, $3$ SBSs belonging to different operators and $3$ unlicensed channels are only shown for simplicity. The channel selection vector over a time window of $3$ epochs is also shown.}\label{system_model}
    \vspace{-1.2cm}
  \end{center}
\end{figure}
%
%\vspace{-0.5cm}
\vspace{-0.4cm}
\section{System Model}
Consider the downlink of an LTE-LAA network composed of a set $\mathcal{J}$ of $J$ LTE-LAA SBSs belonging to different LTE operators, a set $\mathcal{W}$ of $W$ WAPs, and a set $\mathcal{C}$ of $C$ unlicensed channels as shown in Fig.~\ref{system_model}. Each SBS $j\in \mathcal{J}$ has a set $\mathcal{K}_j$ of $K_j$ LTE-LAA UEs associated with it. We consider a network scenario corresponding to environments such as work offices, a university campus, and airports in which the traffic load of a given WAP or SBS can be characterized through a particular model that typically remains unchanged over coarse periods of time (e.g., one day)~\cite{periodic_traffic}. We focus on the operation of the SBSs over the unlicensed band, while the licensed spectrum resources are assumed to be allocated in a conventional way~\cite{ICIC}. Both SBSs and WAPs adopt the LBT access scheme and, thus, at a particular time, a given unlicensed channel is occupied by either an SBS or a WAP. We consider the LTE carrier aggregation feature using which the SBSs can aggregate up to five component carriers belonging to the same or different operating frequency bands~\cite{LTE_aggregation}. This, in turn, would enable the SBSs to operate on multiple unlicensed channels simultaneously thus maximizing their data rate during a particular transmission opportunity.

Our goal is to jointly determine the dynamic channel selection, carrier aggregation, and fractional spectrum access for each SBS, while guaranteeing long-term airtime fairness with WLAN and other LTE-LAA operators. The main motivation for adopting a long-term fairness approach is to avoid the underutilization of the unlicensed spectrum by either serving part of the LTE-LAA traffic when requested or shifting part of it in the future over a given time window in a way that would balance the occupancy of the unlicensed spectrum usage across time and, consequently, improve its degree of utilization. This will subsequently result in an increase in the transmission opportunities for LTE-LAA as well as a decrease in the collision probability for the WLAN. To realize this, we need to dynamically analyze the usage of various unlicensed channels over a particular time window. To this end, we divide our time domain into multiple time windows of duration $T$, each of which consists of multiple time epochs $t$, as shown in Fig.~\ref{time_window}. Our objective is to proactively determine the spectrum allocation vector for each SBS at $t=0$ over $T$ while guaranteeing long-term equal weighted airtime share with WLAN. To guarantee a fair spectrum allocation among SBSs belonging to different operators, we consider inter-operator interference along with inter-technology interference. In fact, inter-operator interference is the consequence of the selfish behavior of different operators and could result in a degradation in the spectral efficiency if not managed. Next, we define $x_{j,c,t}=1$ if channel $c$ is selected by SBS $j$ during time epoch $t$, and 0, otherwise, and $\alpha_{j,c,t} \in [0,1]$. $x_{j,c,t}$ determines the channel $c$ that is used by SBS $j$ during time $t$ and $\alpha_{j,c,t}$ is the channel access probability of SBS $j$ on the unlicensed channel $c$ at time $t$.

\begin{figure}[t!]
  \begin{center}
  %\hspace*{-3.3cm}
  \centering
   \includegraphics[width=12cm]{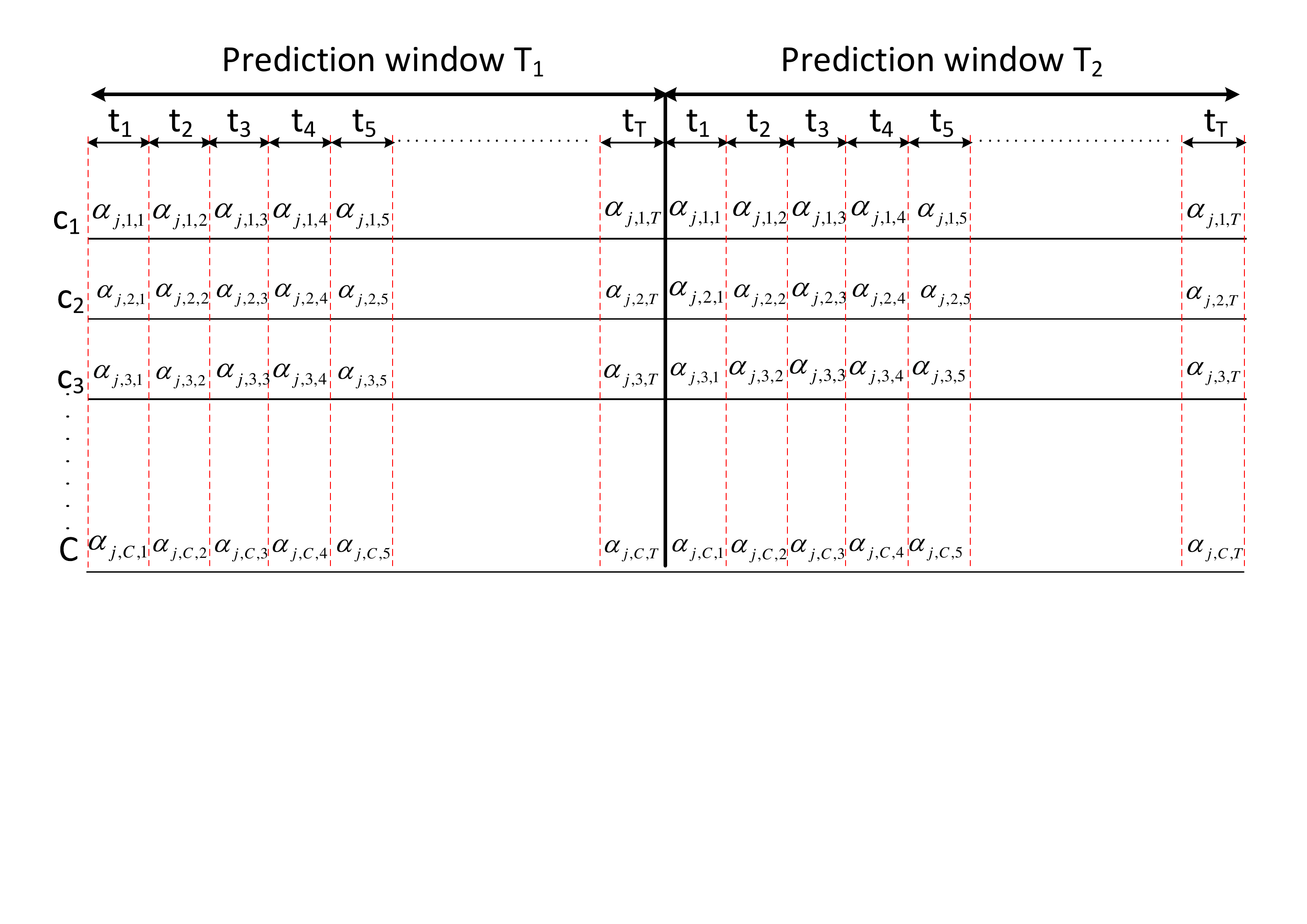} %\hspace*{10cm}
   \vspace{-2.8cm}
   \caption{The division of the time domain into multiple time windows $T$, each of which consists of multiple time epochs $t$.}\label{time_window}
    \vspace{-1.2cm}
  \end{center}
\end{figure}

Since the 3rd Generation Partnership Project (3GPP) has identified LBT as an access mechanism for standardizing a global solution for the operation of LTE in the unlicensed spectrum, we consider a contention-based protocol for our proposed channel access mechanism~\cite{standards_LAA}. In this protocol, prior to transmission, an SBS applies clear channel assessment for the duration of DIFS to detect the state of the channel (idle or busy) based on the detected energy level. If the channel is idle, the SBS would
backoff for a random number between [0, CW] and if the medium was still free, it gets a transmit opportunity for up to 10 LTE sub-frames (considering priority class 1 devices~\cite{standards}); it sends a reservation signal, e.g., clear-to-send (CTS), with the duration of its transmission period along with the remaining time period until the beginning of its next subframe. This allows prevention of other competing devices from getting access to the unlicensed channel until the beginning of the next subframe of the corresponding SBS and hence reserving the channel for transmission. On the other hand, if the channel was busy, the SBS keeps monitoring the channel until it becomes idle. Here, we note that our proposed algorithm is not fully compliant with the regulations in terms of CW size adjustment. In particular, we consider an exponential backoff scheme for WiFi while the SBSs adjust their CW size (and thus the channel access probability) on each of the selected channels in a way that would guarantee a long-term equal weighted fairness with WLAN and other SBSs. In essence, the exponential backoff access method that has been adopted by 3GPP for SBSs can lead to short-term unfairness~\cite{idle_sense}. This results from the fact that, after each collision, the colliding hosts double their CWs and, thus, have higher probability of choosing a larger backoff during which other hosts may benefit from channel access. This also means increased delay for hosts that doubled their CW. Therefore, the standard distributed coordination function (DCF) method controls the load on the channel by reducing the number of contending hosts, because the hosts that have failed their transmission are likely to attempt to access the channel in the future. Moreover, hosts consider all failed transmissions as collisions in DCF, however, this is not always the case. Thus, DCF bases its load control on a biased indicator, which can potentially lead to lower performance and increased unfairness~\cite{idle_sense}. On the other hand, by having a fixed CW size for each SBS during each time epoch $t$, we can alleviate these problems and, more importantly, we can decouple the load control from handling failed transmissions. It is also worth noting that small CW sizes lead to an increase in the collision probability while large CW sizes result in too much time spent waiting in idle slots. Therefore, an efficient access method should adapt the value of the CW of each SBS to the traffic conditions of the network.

To derive the throughput achieved by an LTE-LAA user equipment (UE) and a WAP, we first define the stationary probability of each WAP $w$ and each SBS $j$, $\tau_w$ and $\tau_{j,c,t}$ respectively. The stationary probability is the probability with which a given base station attempts to transmit in a randomly chosen slot. Considering an exponential backoff scheme for WiFi, the stationary probability with which WAPs transmit a packet during a particular WiFi time slot, $\tau_w$, will be given by~\cite{bianchi}:
\vspace{0.3cm}
\begin{align}
\tau_w=\frac{2(1-2q_w)}{(1-2q_w)(\mathrm{CW_{min}}+1)+q_w \mathrm{CW_{min}} (1-(2q_w)^m)},
\end{align}
%\vspace{0.2cm}
\noindent where $q_w$ is the collision probability of a WAP, $m$ is the maximum backoff stage where $\mathrm{CW_{max}}=2^m \mathrm{CW_{min}}$. $\mathrm{CW_{min}}$ and $\mathrm{CW_{max}}$ are the minimum and maximum contention window size, respectively. For LTE-LAA, $m$=0 since no exponential backoff is considered, and, thus, the stationary probability of an SBS on a given unlicensed channel $c$ during time epoch $t$ will be $\tau_{j,c,t}=\frac{2}{\mathrm{CW}_{j,c,t}+1}$, where $\mathrm{CW}_{j,c,t}$ is the contention window size of SBS $j$ on channel $c$ during time epoch $t$. Therefore, we do not consider a contention stage for LTE-LAA and, thus, the CW size of the SBSs is not doubled after each unsuccessful transmission. Instead, the SBSs consider a fixed value for the CW for each time epoch $t$ and this value is adjusted adaptively from one time epoch $t$ to another in order to control the corresponding channel access probabilities over the unlicensed band for different time epochs. The collision probability of a WAP is defined as $q_w=1-\prod_{v=1, v\neq w}^{W}(1-\tau_v)\prod_{j=1}^{J}(1-\tau_{j,c,t})$, where $c$ is the channel used by WAP $w$. The throughput $R_{w}$ of a WAP $w$ during a particular WiFi time slot will be:
\vspace{0.3cm}
\begin{align}\label{wifi_throughput}
R_{w}=\frac{P_{w,\textrm{succ}}\cdot E[D_w]}{P_{w,\textrm{idle}} \cdot \theta + P_{w,\textrm{busy}} \cdot T_b},
\end{align}
where $E[D_w]$ is the expected payload size for WAP $w$, $P_{w,\textrm{succ}}=\tau_w \prod_{v=1, v\neq w}^{W} (1-\tau_v)\prod_{j=1}^{J} (1-\tau_{j,c,t})$ is the probability of a successful transmission, $P_{w,\textrm{idle}}=\prod_{j=1}^{J}(1-\tau_{j,c,t})\prod_{w=1}^{W}(1-\tau_w)$
is the probability of an idle slot, and $P_{w,\textrm{busy}}=1-\prod_{j=1}^{J}(1-\tau_{j,c,t})\prod_{w=1}^{W}(1-\tau_w)$ is the probability of a busy slot, regardless of whether it corresponds to a collision or a successful transmission. $\theta$ and $T_b$ are, respectively, the average durations of an idle and a busy slot and, thus, the denominator in (\ref{wifi_throughput}) corresponds to the mean duration of a WiFi medium access control (MAC) slot.

The achievable airtime fraction for an LTE-LAA SBS $j$ on channel $c$ at time $t$ is:
\vspace{0.2cm}
\begin{align}
\alpha_{j,c,t}=\tau_{j,c,t} \prod_{i=1, i\neq j}^{J} (1-\tau_{i,c,t})\prod_{w=1}^{W} (1-\tau_w).
\end{align}

\begin{figure}[t!]
  \begin{center}
  %\hspace*{-3.3cm}
  \centering
  %\vspace{-0.5cm}
   \includegraphics[scale=0.4]{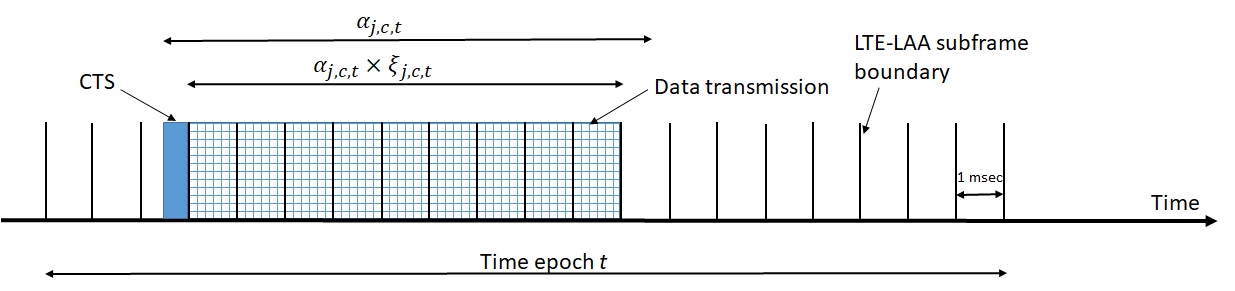} %\hspace*{10cm}
   \vspace{-0.31cm}
   \caption{An illustrative example for computing the actual data transmission time of an LTE-LAA SBS.}\label{illustration}
    \vspace{-0.9cm}
  \end{center}
\end{figure}

The airtime fraction represents the time allocated for an SBS on channel $c$ during time $t$, which essentially accounts for both the data transmission time as well as the reservation signal overhead. Here, it is important to account for the reservation signal overhead during a transmission burst of an SBS when computing the throughput, as done in~\cite{choi}. As such, we let $\xi_{j,c,t}$ be the average fraction of time of $\alpha_{j,c,t}$ during which LTE-LAA SBS is transmitting data. Fig.~\ref{illustration} provides an illustrative example for computing the fraction of time of $t$ during which LTE-LAA SBS is transmitting data. Thus, the total throughput of all $K_{j,t}$ UEs that are served by SBS $j$ during time epoch $t$ is:
%\vspace{0.3cm}
\begin{align}
R_{j,t}= \sum_{c=1}^C \alpha_{j,c,t} \xi_{j,c,t} r_{j,c,t},
\end{align}
%\vspace{-0.15cm}
where
%\vspace{-0.15cm}
\begin{align}~\label{rate_expression}
r_{j,c,t}= \sum_{k=1}^{K_{j,t}} B_c \mathrm{log}\Big(1+\frac{P_{j,c,t} h_{j,k,c,t}}{I_{j,c,t}+ B_c N_0}\Big).
%\vspace{0.3cm}
\end{align}

\vspace{0.2cm}\noindent Here, $I_{j,c,t}=\sum_{w=1}^W P_{w,c,t} h_{w,k,c,t} + \sum_{i=1, i\neq j}^J P_{i,c,t} h_{i,k,c,t}$ is the interference level on SBS $j$ when operating on channel $c$ during time $t$ and $B_c$ is the bandwidth of channel $c$. $P_{j,c,t}$ is the transmit power of SBS $j$ on channel $c$ during time $t$. $h_{j,k,c,t}$ is the channel gain between SBS $j$ and UE $k$ on channel $c$ during time $t$. $N_0$ is the power spectral density of additive white Gaussian noise. Since SBSs and WAPs both adopt LBT, then one cell may occupy the entire channel at a given time thus transmitting \emph{exclusively} on a given channel $c$. However, hidden and exposed terminals could be present on a given channel which can result in interference and thus a degradation in the throughput.

Given this system model, next, we develop an effective spectrum allocation scheme that can allocate the appropriate unlicensed channels along with the corresponding channel access probabilities to each SBS simultaneously over $T$, at $t=0$.
%In what follows, we present our proposed scheme for allocating spectrum resources to SBSs in the unlicensed band.
\vspace{-0.4cm}
\section{Proactive Resource Allocation Scheme for Unlicensed LTE}
\subsection{Proactive Resource Allocation Game}
We formulate the resource allocation problem as a noncooperative game $\mathcal{G}\textrm{=}(\mathcal{J}, \mathcal{A}_j, u_j)$ with the SBSs in $\mathcal{J}$ being the players, each of which must choose a channel selection and channel access pair $a_{j,c,t}=$($x_{j,c,t}$,$\alpha_{j,c,t}$) $\in \mathcal{A}_j$ at $t=0$ for each $t$ of the next time window $T$. The objective of each SBS $j$ is to maximize its total throughput over the selected channels:

%Therefore, the cardinality of the action space is $\left|T\right| \times \left| \boldsymbol{x}_{j,c}\mid \times \right| \boldsymbol{\alpha}_{j,c}\mid$ where $\left| \boldsymbol{x}_{j,c} \right|$ is defined as:
%\begin{equation}\nonumber
%\left|\boldsymbol{x}_{j,c}\right| = \sum_{c=1}^{M_c}{\binom{N_c}{c}}
%\end{equation}
%where $M_c$ is the total number of unlicensed channels which an LTE-LAA SBS can aggregate.

%\vspace{-0.2cm}
\vspace{-0.4cm}
\begin{align}\label{utility}
u_j(\boldsymbol{a}_j, \boldsymbol{a}_{-j})= \sum_{t=1}^T \sum_{c=1}^{C} \alpha_{j,c,t} \xi_{j,c,t} r_{j,c,t},
\end{align}

\noindent where $\boldsymbol{a}_{j} = [(a_{j,1,1}, \cdots, a_{j,1,T}), \cdots, (a_{j,C,1}, \cdots, a_{j,C,T})]$ and $\boldsymbol{a}_{-j}$ correspond, respectively to the action vector of SBS $j$ and all other SBSs, over all the channels $\mathcal{C}$ during $T$. Note that the utility function (\ref{utility}) of SBS $j$ depends on its actions as well as those of other SBSs which makes the formulation of a game model suitable for this problem. This is mainly due to the interference from other SBSs transmitting on the same channel as SBS $j$ as it was shown previously in the definition of the rate expression in (\ref{rate_expression}). The goal of each SBS $j$ is to maximize its own utility:
%\vspace{-0.2cm}
\begin{align}\label{obj}
\max_{\mathbf{\boldsymbol{\emph{a}}_\emph{j} \in \mathcal{A}_\emph{j}}} u_{j}(\boldsymbol{a}_j, \boldsymbol{a}_{-j})  \;\; \forall j\in \mathcal{J},
\end{align}
\vspace{-0.8cm}
\begin{align}\label{cons_1}
\hspace{-2cm}\mathrm{s. t.} \;\;\;\;\;\;\;\;\;\;\; \alpha_{j,c,t} \leq x_{j,c,t} \;\;\;\forall c, t,
\end{align}
\vspace{-0.8cm}
\begin{align}\label{cons_2}
\sum_{c=1}^{C} x_{j,c,t} \leq \mathrm{min} (M_c, C)  \;\;\;\forall t,
\end{align}
\vspace{-0.4cm}
\begin{align}\label{cons_3}
\sum_{t_T=1}^t\sum_{c=1}^{C} \alpha_{j,c,t_T} B_c \leq \sum_{t_T=1}^tf(L_{j,t_T}) \;\;\;\forall t,
\end{align}
\vspace{-0.5cm}
\begin{align}\label{cons_4}
\alpha_{w,c,t} + \alpha_{j,c,t} + \sum_{i=1, i\neq j}^J \alpha_{i,c,t} \leq t_{\mathrm{max}} \;\;\;\forall c, t,
\end{align}
\vspace{-0.7cm}
\begin{align}\label{cons_5}
x_{j,c,t}\in \{0,1\}, \;\; \alpha_{j,c,t}\in [0,1] \;\;\;\forall c, t,
\end{align}
\vspace{-0.1cm}
where $M_c$ is the total number of unlicensed channels which an SBS can aggregate. (\ref{cons_1}) allows the allocation of a channel
access proportion for SBS $j$ on channel $c$ during $t$ only if SBS
$j$ transmits on channel $c$ at time $t$. (\ref{cons_2}) guarantees that each SBS can aggregate a maximum of $M_c$ channels at a given time $t$. (\ref{cons_3}) limits the amount of allocated bandwidth to the required demand where $f(L_{j,t})$ captures the relationship between bandwidth requirement and offered load.
%and is a function of the modulation and coding scheme, channel conditions, guard interval.
(\ref{cons_4}) captures coupling constraints which limit the proportion of time used by SBSs and WLAN on a given unlicensed band to the maximum fraction of time an unlicensed channel can be used, $t_{\mathrm{max}}$\footnote{$t_{\mathrm{max}}$ depends on the channel access method in the unlicensed band and should
be strictly less than 1 in the case of LBT, otherwise, the channel will always be sensed busy and devices would not be able to access it.}. (\ref{cons_5}) represents the feasibility constraints.
%are feasibility constraints; SBS $j$ is allocated a channel access proportion on channel $c$ at time $t$ only if it transmits on $c$ at time $t$.

%Unlike the licensed bands that are either associated with exclusivespectrum access or shared between multiple technologies based on a primary-secondary dynamic spectrum access mode, the unlicensed bands are a prominent example of a spectrum commons and thus can be used by any technology as long as the regulatory requirements are met. Therefore,

Given the fact that different operators and technologies have equal priorities on the unlicensed spectrum, we incorporate the Homo Egualis (HE) anthropological model, an inequity-averse based fairness model, into the strategy design of the agents~\cite{gintis}.
\vspace{-0.1cm}
\begin{definition} Inequity aversion \emph{is the preference for fairness and resistance to incidental inequalities. In other words, it refers to the willingness of giving up some material payoff in order to move in the direction of more equitable outcomes.}
\end{definition}
%\vspace{-0.1cm}
In an HE society, agents focus not only on maximizing their own payoffs, but also become aware of how their payoffs are compared to other agents' payoffs~\cite{gintis, inequity}. Therefore, their utility function is influenced not only by their own reward, but also by envy
and altruism. An agent is altruistic to others if its payoff is above an equitable benchmark and is envious of the others if its payoff exceeds that benchmark and therefore, an unfair distribution of resources among agents results in disutility for inequity-averse agents. The HE concept comes from the anthropological literature in which Homo sapiens evolved in small hunter-gatherer groups without a centralized governance~\cite{gintis}.

In fact, we incorporate the notion of airtime fairness in the modeling of our HE agents. The average airtime per radio system is considered as one of the most important fairness metrics in the unlicensed band and is the focus of this work~\cite{mangold_HE}. Our motivation for considering a time-fair channel allocation scheme is to overcome the rate anomaly problem that arises when different nodes use distinct data rates, which leads to the slowest link limiting the system performance~\cite{operators_mobihoc, mangold_HE}, and~\cite{idle_sense}. Therefore, to model our players as HE agents, we consider the following two coupling constraints for the allocated airtime fraction on each channel $c$ for each SBS $j$:

\vspace{-0.4cm}
\begin{align}\label{cons_6}
\small
\frac{1}{w_{j,c}} \frac{1}{T} \frac{\sum_{t=1}^T\alpha_{j,c,t}}{\sum_{t=1}^T\bar{L}_{j,t}}\textrm{$=$}\frac{1}{w_{i,c}}\frac{1}{T} \frac{\sum_{t=1}^T \alpha_{i,c,t}}{\sum_{t=1}^T \bar{L}_{i,t}} \;\forall c\in \mathcal{\widehat{C}}_{j}, i\in \mathcal{\widehat{S}}_{j,c} (i\neq j)\textrm{,}
\end{align}
\vspace{-0.3cm}
\begin{align}\label{cons_7}
\small
\frac{1}{T} \frac{\sum_{t=1}^T\sum_{n \in\mathcal{S}_{c,t}} \alpha_{n,c,t}}{P_{\mathrm{LTE}}\sum_{t=1}^T \sum_{n\in \mathcal{S}_{c,t}} \bar{L}_{n,t}}\textrm{$=$}\frac{1}{T}  \frac{\sum_{t=1}^T\alpha_{w,c,t}}{P_{\mathrm{WiFi}}\sum_{t=1}^T L_{w,c,t}} \;\;\forall c\in \mathcal{\widehat{C}}_{j}\textrm{,}
\end{align}

\vspace{0.1cm}\noindent where $\mathcal{\widehat{C}}_{j}$ is the subset of channels used by SBS $j$ during $T$. $\mathcal{S}_{c,t}$ is the subset of SBSs that are transmitting over channel $c$, $c \in \mathcal{\widehat{C}}_{j}$, during time $t$ and $\mathcal{\widehat{S}}_{j,c}$ is the subset of other neighboring SBSs, $i \neq j$, that are using the same channel $c \in \mathcal{\widehat{C}}_{j}$ as SBS $j$ during $T$. $\bar{L}_{j,t}=L_{j,t}-\sum_{c'}f(\alpha_{j,c',t})$ corresponds to the remaining traffic that needs to be served by SBSs $j$ with $L_{j,t}$ being the $\emph{total}$ aggregate traffic demand of SBS $j$ on channel $c$ during time epoch $t$. $f(.)$ corresponds to the served traffic load as a function of the airtime allocation. $c'$ represents all the set of channels except channel $c$. $\alpha_{w,c,t}= \mathrm{min}(f(L_{w,c,t})$, $t_{\mathrm{max}}-\alpha_{j,c,t} - \sum_{i \in\mathcal{S}_{j,c,t}} \alpha_{i,c,t}$) is the airtime allocated for WLAN transmissions over channel $c$ during time $t$. $P_{\mathrm{WiFi}}$ and $P_{\mathrm{LTE}}$ correspond to the priority metric defined for each technology when operating on the unlicensed band. These parameters allow adaptation of the level of fairness between LTE-LAA and WLAN.

Constraint (\ref{cons_6}) represents inter-operator fairness which guarantees an equal weighted airtime allocation among SBSs belonging to different operators on a given channel $c$. The adopted notion of fairness is based on a long-term weighted equality over $T$, as opposed to instantaneous weighted equality. $w_{j,c}=\sum_{t=1}^T x_{j,c,t}$ is the weight of SBS $j$ on channel $c$ during $T$ and thus different SBSs are assigned different weights on each channel $c$ based on the number of time epochs $t$ a given SBS $j$ uses that particular channel. (\ref{cons_7}) defines an inter-technology fairness metric to guarantee a long-term equal weighted airtime allocation over $T$ for both LTE-LAA and WiFi. Therefore, (\ref{cons_6}) and (\ref{cons_7}) reflect the inequity aversion property of the SBSs.

In fact, the optimal value of $T$, which corresponds to the time window size that allows the maximum achievable throughput for LTE-LAA as compared to the reactive approach, is dataset dependent. Next, we characterize the optimal value of $T$ under a uniform traffic distribution.

%\ucc{talk more about time fairness. check idle sense paper.}
%In a multi-rate network scenario, slower links occupy the channel for a longer period than the faster links and hence the maximum achievable throughput of a given host is bounded by the slowest data rate peer.

\begin{proposition}~\label{proposition_traffic}
\emph{For a uniform traffic distribution, the optimal value of $T$ is equal to 1.}
%\ucc{only in the case each SBS operates on 1 channel only. the traffic is constant on all channels
%and thus SBSs would not change channels within the time window $T$. i.e., whether they aggregate multiple channels or not. and not if disjoint channels}}
%our proposed long-term fairness notion does not exhibit gain over the instantaneous based fairness approach in the case of a uniform traffic distribution.}
\end{proposition}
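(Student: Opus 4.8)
The plan is to show that a ``uniform traffic distribution'' strips problem \eqref{obj}--\eqref{cons_5} of any temporal structure, so that the proactive window-$T$ problem decouples into $T$ identical single-epoch problems and looking ahead buys nothing; the minimal, least-complex choice $T=1$ is therefore optimal. First I would make the hypothesis precise: uniform traffic means $L_{j,t}\equiv L_j$ for every epoch $t$ (equivalently the per-epoch offered loads, and hence the stationary WLAN loads $L_{w,c,t}\equiv L_{w,c}$, are epoch-invariant), so that $B_c$, $r_{j,c,t}$, $\xi_{j,c,t}$ and the load-to-bandwidth map $f(L_{j,t})$ are all the same in every epoch. Under this assumption the utility $u_j=\sum_{t=1}^{T}\sum_c \alpha_{j,c,t}\xi_{j,c,t} r_{j,c,t}$ is a sum of $T$ copies of one per-epoch throughput functional, and the constraints split into the purely per-epoch constraints \eqref{cons_1}, \eqref{cons_2}, \eqref{cons_4}, \eqref{cons_5}, which carve out the same region in each epoch, and the epoch-coupling constraints \eqref{cons_3}, \eqref{cons_6}, \eqref{cons_7}.

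The core step is to show that with flat traffic the coupling constraints neither create temporal arbitrage nor obstruct replication. Let $\boldsymbol{a}_j^{\star}$ solve the single-epoch ($T=1$) best-response problem, with value $u_j^{(1)}$, and consider the stationary schedule that repeats $\boldsymbol{a}_j^{\star}$ in every epoch. Constraint \eqref{cons_3} then reads $t\sum_c\alpha_{j,c}^{\star}B_c\le t\,f(L_j)$, which holds because it holds at $t=1$; and in \eqref{cons_6}--\eqref{cons_7} the time-averages collapse and, since $w_{j,c}=\sum_t x_{j,c,t}=T x_{j,c}^{\star}$, the explicit factors of $T$ cancel on both sides, reducing each equality to its single-epoch version, which $\boldsymbol{a}_j^{\star}$ already satisfies. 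Hence the stationary schedule is feasible and attains $u_j=T\,u_j^{(1)}$. For the reverse direction, each of the $T$ summands of $u_j$ is a per-epoch throughput evaluated at a feasible per-epoch point, and under a constant offered load no epoch can contribute more than $u_j^{(1)}$ --- the throughput-relevant airtime in any epoch is capped by $t_{\mathrm{max}}$ in \eqref{cons_4} independently of $T$, and a flat load leaves no backlog that a longer window could drain faster --- so $u_j\le T\,u_j^{(1)}$. Consequently the optimal per-epoch LTE-LAA throughput equals $u_j^{(1)}$ for every $T\ge 1$: the proactive scheme yields no gain over the reactive one and $T=1$ is optimal.

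The step I expect to be the main obstacle is ruling out that a $T>1$ schedule which front-loads service (or deliberately idles and then catches up) can beat the stationary one through \eqref{cons_3}, since that constraint bounds only the \emph{cumulative} allocation, so in isolation a single epoch could over-allocate relative to $f(L_j)$. I would close this gap by coupling \eqref{cons_3} with the per-epoch cap \eqref{cons_4}: the throughput-relevant airtime per epoch is bounded by $t_{\mathrm{max}}$ regardless of $T$, and because $L_j$ is constant the cumulative slack built up by idling is never larger than what that same cap permits to be drained later, so no front-loaded schedule raises the sum above $T\,u_j^{(1)}$. A complementary and cleaner argument: separability together with the fact that the same $\boldsymbol{a}_j^{\star}$ maximizes every per-epoch term means any deviation in some epoch strictly lowers that epoch's contribution while --- the coupling constraints being met with slack by the stationary solution --- enabling no compensating increase elsewhere, which forces the stationary schedule to be optimal and hence $T=1$.
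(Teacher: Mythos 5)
Your overall strategy --- show that uniform traffic makes the window-$T$ problem decouple into $T$ identical copies of the single-epoch problem, so that repeating the $T=1$ optimum is both feasible and unimprovable --- is a genuinely different and structurally more ambitious route than the paper's. The paper does not argue via a two-sided bound on the objective at all: it works only with the inter-technology fairness constraint (\ref{cons_7}), observes that under uniform loads the ratio $\sum_t\sum_j \bar{L}_{j,t}\big/\sum_t L_{w,c,t}$ is epoch-invariant and hence factors out of the time average, and concludes that the average LTE-LAA airtime prescribed by the long-term constraint over any $T$ coincides with the average of the airtimes prescribed epoch-by-epoch at $T=1$ (its chain (\ref{long_term_1})--(\ref{instantaneous_2})). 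Your lower-bound half --- replicating the stationary single-epoch optimum is feasible because the cumulative constraint (\ref{cons_3}) and the fairness equalities (\ref{cons_6})--(\ref{cons_7}) collapse to their single-epoch versions under epoch-invariant loads --- is correct and is essentially the same collapse the paper performs.

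The gap is in your upper-bound half. You assert that ``no epoch can contribute more than $u_j^{(1)}$'' and justify this via the per-epoch cap $t_{\mathrm{max}}$ in (\ref{cons_4}); but $u_j^{(1)}$ is determined by the fairness \emph{equalities}, which bind only window totals rather than per-epoch values, and the single-epoch airtime they prescribe is in general strictly below the $t_{\mathrm{max}}$ cap. A $T>1$ schedule may therefore allocate more than the $T=1$ airtime in some epochs and less in others while still satisfying (\ref{cons_6})--(\ref{cons_7}); individual summands of $u_j$ \emph{can} exceed $u_j^{(1)}$, so the epoch-wise domination claim (and the ``any deviation strictly lowers that epoch's contribution with no compensating increase elsewhere'' variant) does not go through as stated --- the coupling equalities explicitly permit compensating redistributions across epochs. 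The repair is immediate and is exactly where the paper's calculation does the work: under uniform traffic the coefficients $\xi_{j,c,t}\, r_{j,c,t}$ are epoch-invariant, so $u_j$ depends on the schedule only through the window totals $\sum_{t=1}^{T}\alpha_{j,c,t}$, and those totals are pinned by (\ref{cons_7}) to $T$ times their $T=1$ values. Linearity of the objective in the airtimes, not per-epoch optimality, is what closes the upper bound; with that substitution your argument is complete.
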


%Note that this uniform demand is a special case of the fluctuating traffic pattern.

\begin{proof}
Under a uniform demand model, the traffic load for each of SBS $j$ and WAP $w$ is an independent and identically distributed (i.i.d.) sequence of random variables
which implies that all requests of the same user are statistically indistinguishable over time. In our model, WAPs are passive in that their channel selection action is fixed and, thus, the activity on a given channel is characterized by the level of activity of WAPs operating on that channel. In that case, the WLAN traffic load on each channel also
follows a uniform distribution. At the convergence point, (\ref{cons_1})-(\ref{cons_7}) are satisfied and, hence, the average airtime allocated to the LTE-LAA
network on channel $c$ over the time window $T$ will be:
\vspace{0.15cm}
\begin{align}\label{long_term_1}
\frac{1}{T} \sum_{t=1}^T \sum_{j \in\mathcal{S}_{c,t}} \alpha_{j,c,t}=\frac{P_{\mathrm{LTE}}}{P_{\mathrm{WiFi}}} \frac{\sum_{t=1}^T \sum_{j \in\mathcal{S}_{c,t}} \bar{L}_{j,t}}{\sum_{t=1}^T L_{w,c,t}} \frac{1}{T} \sum_{t=1}^T \alpha_{w,c,t} \;\;\forall c\in \mathcal{C}\textrm{,}
\end{align}

%\ucc{mention that the traffic of a given SBS is distributed the same on the different aggregated channels for different time epochs $t$. SBSs would not change their channel selection action within $T$ since the WLAN traffic is constant on each channel so the optimal channel selection for each t in $T$ is the same. In other words, the same
%exact game is played multiple $t$ times over the time window $T$. talk also about $\bar{L}$.}

\vspace{0.1cm}However, for the case of uniform traffic demand, the channel selection
vector over $T$ is the same for each SBS because the network state is the same for every $t$ in $T$. Moreover, if an SBS aggregates multiple channels, then its load on each channel is the same for each $t$
in $T$. This implies that $\bar{L}_{j,t}$ for each SBS $j$ is uniform over $T$
and thus $\frac{\sum_{t=1}^T \sum_{j \in\mathcal{S}_{c,t}} \bar{L}_{j,t}}{\sum_{t=1}^T L_{w,c,t}}=\frac{\sum_{j \in\mathcal{S}_{c,t}} \bar{L}_{j,t}}{L_{w,c,t}}$.
Consequently, (\ref{long_term_1}) can be written as:
\vspace{0.1cm}
\begin{align}\label{long_term_2}
\frac{1}{T} \sum_{t=1}^T \sum_{j \in\mathcal{S}_{c,t}} \alpha_{j,c,t}=\frac{P_{\mathrm{LTE}}}{P_{\mathrm{WiFi}}} \frac{\sum_{j \in\mathcal{S}_{c,t}} \bar{L}_{j,t}}{L_{w,c,t}} \frac{1}{T} \sum_{t=1}^T \alpha_{w,c,t}\;\;\forall c\in \mathcal{C}\textrm{,}
\end{align}

\vspace{0.1cm}When $T=1$, the airtime allocated to the LTE-LAA network on channel $c$ will be:
\vspace{0.2cm}
\begin{align}
\sum_{j \in\mathcal{S}_{c,t}}\alpha_{j,c,t}=\frac{P_{\mathrm{LTE}}}{P_{\mathrm{WiFi}}} \frac{\sum_{j \in\mathcal{S}_{c,t}} \bar{L}_{j,t}}{L_{w,c,t}} \alpha_{w,c,t} \;\;\; \forall t, c\in \mathcal{C}\textrm{,}
\end{align}

Over a fixed time window $T$, the average airtime allocated to the LTE-LAA network on channel $c$ can be written as:
\begin{align}\label{instantaneous}
\frac{1}{T} \sum_{t=1}^T \sum_{j \in\mathcal{S}_{c,t}} \alpha_{j,c,t}&=\frac{1}{T} \sum_{t=1}^T \Big(\frac{P_{\mathrm{LTE}}}{P_{\mathrm{WiFi}}} \frac{\sum_{j \in\mathcal{S}_{c,t}} \bar{L}_{j,t}}{L_{w,c,t}} \alpha_{w,c,t}\Big) \\
&= \frac{P_{\mathrm{LTE}}}{P_{\mathrm{WiFi}}} \frac{\sum_{j \in\mathcal{S}_{c,t}} \bar{L}_{j,t}}{L_{w,c,t}} \frac{1}{T} \sum_{t=1}^T \alpha_{w,c,t}\;\;\forall c\in \mathcal{C}. \label{instantaneous_2}
\end{align}
(\ref{instantaneous_2}) is equivalent to (\ref{long_term_2}) and, hence, our proposed framework does not offer any gain for the LTE-LAA network
when considering a time window $T$ larger than 1 in the case of a uniform traffic pattern. This completes the proof.% Note that this proof can be easily generalized for larger network scenarios.
%\ucc{also mention that since this is valid for every channel, the over all the set of channels, there will be no gain for T larger than 1.}
\end{proof}

From Proposition~\ref{proposition_traffic}, we conclude that the gain of our proposed long-term fairness notion is evident in the case of traffic fluctuations. Under a uniform traffic distribution, the SBSs
cannot make use of future off-peak times to shift part of their traffic forward in time and, hence, the gain is limited to predicting the network state for the next time epoch only.
It is also worth noting that the gain of the proactive scheduling approach decreases in the case of a highly congested WLAN network. This is mainly due to the fact
that the system becomes more congested with incoming requests, thereby restricting the opportunities of shifting part of the LTE-LAA load in the future.
%This Lemma follows from the fact that the achieved proactive gain arises from the fact that one can shift part of the LTE-LAA load to be served at different time epochs within the time window $T$.
%Therefore, in the case of high WLAN activity, the proactive gain decreases as the system becomes more congested with incoming requests, thereby restricting the opportunities of shifting the load.

\subsection{Equilibrium Analysis}
Our game $\mathcal{G}$ is a generalized Nash equilibrium problem (GNEP) in which both the objective functions and the action spaces are coupled. To solve the GNEP, we incorporate the Lagrangian penalty method into the utility functions thus reducing it to a simpler Nash equilibrium problem (NEP). The resulting penalized utility function will be, $\forall (j\in \mathcal{J})$:
%For our game, incorporating the coupled constraints using the quadratic penalty function leads to the following:
%As it have been shown earlier, the generalized Nash equilibrium problem (GNEP) is an extension of the classical Nash equilibrium problem where both the objective functions and the constraints of each player may depend on the rivals' strategies. For GNEPs, a penalty method exists where a sequence of penalized NEPs are solved iteratively, i.e., we reduce the constrained game model to a sequence of successive unconstrained problems.
%\vspace{-0.6cm}
\begin{align}\label{penalized_utility}
\small
\widehat{u}_j(\boldsymbol{a}_j, \boldsymbol{a}_{-j})= \sum_{t=1}^T \sum_{c=1}^C \alpha_{j,c,t} r_{j,c,t}
\small
&\mathrm{-}\rho_{1,j} \sum_{c=1}^C \sum_{t=1}^T \Big(\mathrm{min} (0,t_{\mathrm{max}} - \alpha_{w,c,t} - \alpha_{j,c,t} - \sum_{i=1, i\neq j}^J\alpha_{i,c,t})\Big)^2 \nonumber\\
\small
&\mathrm{-}\rho_{2,j}\hspace{-0.1cm} \sum_{c\in \mathcal{\widehat{C}}_{j}} \sum_{i\in \mathcal{\widehat{S}}_{j,c} (i\neq j)} \hspace{-0.1cm}\frac{1}{T^2}\left(\hspace{-0.1cm}\frac{1}{w_{j,c}} \frac{\sum_{t=1}^T\alpha_{j,c,t}}{\sum_{t=1}^T\bar{L}_{j,t}}\mathrm{-}\frac{1}{w_{i,c}} \frac{\sum_{t=1}^T \alpha_{i,c,t}}{\sum_{t=1}^T \bar{L}_{i,t}} \hspace{-0.1cm}\right)^2 \vspace{-0.3cm} \nonumber \\ \vspace{-0.3cm}
\small
&\mathrm{-} \rho_{3,j} \hspace{-0.1cm}\sum_{c\in \mathcal{\widehat{C}}_{j}} \frac{1}{T^2}\hspace{-0.1cm}\left(\hspace{-0.1cm} \frac{\sum_{t=1}^T\sum_{n \in\mathcal{S}_{c,t}} \alpha_{n,c,t}}{P_{\mathrm{LTE}}\sum_{t=1}^T \sum_{n\in \mathcal{S}_{c,t}} \bar{L}_{n,t}}\mathrm{-} \frac{\sum_{t=1}^T\alpha_{w,c,t}}{P_{\mathrm{WiFi}}\sum_{t=1}^T L_{w,c,t}}\hspace{-0.1cm}\right)^\textrm{$2$}\hspace{-0.1cm}\textrm{,}
%\vspace{1cm}
\normalsize
\end{align}

\vspace{0.3cm}
\noindent where $\rho_{1,j}$, $\rho_{2,j}$ and $\rho_{3,j}$ are positive penalty coefficients corresponding to constraints (\ref{cons_4}), (\ref{cons_6}), and (\ref{cons_7}), respectively.
Here, we consider equal penalty coefficients for all players for each coupled constraint, $\rho_{1,j}=\rho_1$, $\rho_{2,j}=\rho_2$ and $\rho_{3,j}=\rho_3$. This allows all SBSs to have equal incentives to give up some payoff in order to satisfy the coupled constraints. To determine the values of $\rho_1$, $\rho_2$ and $\rho_3$, we adopt the incremental penalty algorithm in~\cite{fukushima} that guarantees the existence of penalty parameters $\boldsymbol{\rho}_l^*=[\rho_1^*, \rho_2^*, \rho_3^*]$ that satisfy the coupled constraints.% [Theorem 4.1, page 9].
%\ucc{add more about GNEP and penalty method}
%Therefore, the penalized Nash equilibrium problem (PNEP) $\Gamma$ can defined by the utility function in (\ref{penalized_utility}) subject to constraints (\ref{cons_2}), (\ref{cons_3}), and (\ref{cons_5}).

In our game $\mathcal{G}$, $\alpha_{j,c,t}$ is a continuous variable bounded between 0 and 1, however, for a particular network state, we are interested only in a certain region of the continuous space where the optimal actions are expected to be. Therefore, we will propose a sampling-based approach to discretize $\alpha_{j,c,t}$ in Section~\ref{learning}. Under such a discretization of the action space, we turn our attention to mixed strategies in which players choose their strategies probabilistically. Such a mixed-strategy approach enables us to analyze the frequency with which players choose different channels and channel access combinations. In fact, the optimal policy is often stochastic and therefore requires the selection of different actions with specific probabilities~\cite{policy_gradient}. This, in turn, validates our choice of adopting a mixed strategy approach as opposed to a pure strategy one that is oriented towards finding deterministic policies. A player can possibly choose different possible actions with different probabilities which enables it to play a combination of strategies over time. Moreover, unlike pure strategies that might not exist for a particular game, there always exists at least one equilibrium in mixed strategies~\cite{nash_existence}.

%enables users to adopt more subtle strategies to improve their utilities
%Mixed strategies introduce deliberate randomness into the decision of a player such that the player can use more subtle strategies in the competition with other players. In consequence, the utilities obtained due to applying mixed strategies can be potentially improved for the users. Introducing mixed strategies is also instrumental for capturing the stochastic regularities of equilibria and players’ strategies in noncooperative games [4].
Let $\Delta(\mathcal{A})$ be the set of all probability distributions over the action space $\mathcal{A}$ and $\boldsymbol{p}_j=[p_{j,\boldsymbol{a}_{1}} \cdots, p_{j,\boldsymbol{a}_{\mid\mathcal{A}_j\mid}}]$ be a probability distribution with which SBS $j$ selects a particular action from $\mathcal{A}_j$. Therefore, our objective is to maximize the expected value of the utility function, $\overline{u}_j(\boldsymbol{p}_j, \boldsymbol{p}_{-j})={\mathds{E}_{\boldsymbol{p}_{j}} \left[\widehat{u}_{j}\left( \boldsymbol{a}_{j},\boldsymbol{a}_{-j} \right)\right]}=\sum_{\boldsymbol{a}\in\mathcal{A}}\widehat{u}_{j}(\boldsymbol{a}_j, \boldsymbol{a}_{-j}) \prod_{j=1}^J p_{j,\boldsymbol{a}_j}$.

\begin{definition}\emph{A mixed strategy $\boldsymbol{p}^*\textrm{$=$}(\boldsymbol{p}_1^*, \cdots, \boldsymbol{p}_J^*)\textrm{$=$}(\boldsymbol{p}_j^*, \boldsymbol{p}^*_{-j})$ constitutes a} mixed-strategy Nash equilibrium \emph{if, $\forall j \in \mathcal{J}$ and $\forall \boldsymbol{p}_j \in \Delta (\mathcal{A}_j)$, $\overline{u}_j(\boldsymbol{p}^*_j, \boldsymbol{p}^*_{-j})\geq \overline{u}_j(\boldsymbol{p}_j, \boldsymbol{p}^*_{-j})$.}
%where $\overline{u}_j(\boldsymbol{p}_j, \boldsymbol{p}_{-j})={\mathds{E}_{\boldsymbol{a}_{j}}\widehat{u}_{j}\left( \boldsymbol{a}_{j},\boldsymbol{a}_{-j} \right)}$=$\sum_{\boldsymbol{a}_j\epsilon\mathcal{A}_j}\widehat{u}_{j}(\boldsymbol{a}_j, \boldsymbol{a}_{-j}) p_{\boldsymbol{a}_j}$.}
\end{definition}

Here, we note that any finite noncooperative game will admit at least one mixed-strategy Nash equilibrium~\cite{nash_existence}. To solve for the mixed-strategy NE of our game $\mathcal{G}$, we first consider the simpler scenario in which the number of SBSs is less than the number of unlicensed channels. Then, we develop a learning algorithm to handle the more realistic scenario in which the number of SBSs is much larger than the number of unlicensed channels.
%Next, we characterize the channel allocation strategy of the SBSs in such scenarios. %when their number is less than that of the unlicensed channels.

\begin{remark}\label{proposition_disjoint_channels}
\emph{If the number of SBSs is less than the number of available unlicensed channels (i.e., $J\leq C$), then the mixed-strategy NE solution will simply reduce to a pure strategy that is reached when all SBSs occupy disjoint channels during each time epoch of the time window $T$.}
%An NE is a state of the game in which all players play their BR strategy with respect to one another.
\end{remark}

To show this, we consider two cases depending on whether or not carrier aggregation is enabled. Let $M_c=1$. Consider the state in which each SBS is operating on a different unlicensed channel.
If SBS $j$ changes its channel from $c$ to $c'$ on which SBS $i$ is transmitting,
then it would have to share channel $c'$ with SBS $i$ in an equal weighted manner (based on the inter-operator fairness constraint).
This leads to a decrease in the reward function of SBS $i$ on channel $c'$ (and potentially for SBS $j$), which makes
SBS $i$ deviate to another channel that is less occupied (e.g., $c$). Therefore, a given strategy cannot be a best response (BR) strategy for SBS $i$ in case it results in its transmission on the same channel as SBS $j$. Therefore, all strategies that result in more than one SBS occupying the same channel are dominated by the alternative where
different SBSs transmit on disjoint channels and hence cannot correspond to BR strategies. Consequently, at the NE point, all SBSs play their BR strategies that would result in each SBS occupying a disjoint channel.
%For a given time epoch of the time window $T$ and for $M_c=1$,
Similarly for $M_c>1$. If SBS $j$ transmits on multiple channels, then aggregating a channel that is already occupied by SBS $i$ would
make SBS $i$ change its operating channel to a less congested one. This implies that an SBS would not aggregate more channels unless
they are not occupied by other SBSs.
%Therefore, in the case when $J\leq C$, a mixed-strategy NE for game $\mathcal{G}$ corresponds to the state when SBSs occupy disjoint channels.

Therefore, we can conclude that our proposed scheme results in having less number of SBSs on each of the unlicensed bands. This leads to a lower collision probability on each channel and a better coexistence with WLAN. Moreover, enabling carrier aggregation does not necessarily allow LTE to offload more traffic to the unlicensed band. On the other hand, our proposed scheme can avoid causing performance degradation to WLAN in case a large number
of LTE operators deploy LTE-LAA in the unlicensed bands.

Now, when $J>C$, multiple SBSs will then potentially have to share the same channel. In this case, the mixed-strategy NE is challenging to characterize, and therefore, next, we propose a learning-based approach for solving our game $\mathcal{G}$. Given the fact that each SBS needs to learn a $\emph{sequence}$ of actions over the time window $T$ at $t=0$ based on a $\emph{sequence}$ of previous network states, the proposed learning algorithm must be capable of generating data that is sequential in nature. This necessitates the knowledge of historical traffic values as well as future network states for all the time epochs of the following time window $T$. Moreover, in order to satisfy the long-term fairness constraints (\ref{cons_6}) and (\ref{cons_7}), future actions cannot be assumed to be independent due to the long-term temporal dependence among these actions. Conventional reinforcement learning algorithms such as Q-learning and multi-armed bandit take as an input the current state of the network and enable the prediction of the next state only and therefore do not account for the interdependence of future actions~\cite{RL_intro}. To learn several steps ahead in time, recursive learning can be adopted. However, such an approach uses values already predicted, instead of measured past values which produces an accumulation of errors that may grow very fast. In contrast, deep learning techniques, such as time series prediction algorithms, are capable of learning long-term temporal dependence sequences based on input sequences~\cite{LSTM, RNN_survey}. This is viable due to their adaptive memory that allows them to store necessary previous state information to predict future events. Therefore, next, we develop a novel time series prediction algorithm based on deep learning techniques for solving the mixed-strategy NE of our game.

%To address this issue, we need an algorithm with time-varying inputs, designed to produce interdependent outputs in different points in time.

%aim at proposing a novel sequence level training algorithm that directly optimizes the metric used at test time. moreover, talk about training and testing... function approximation in case of change of environment.

%to solve for the mixed-strategy NE of our game, we need a network with time-varying inputs, designed to provide outputs in different points in time, known as dynamic neural networks and

%In fact, deep learning is necessary for our game in order to predict the future network state and thus allow a proactive resource allocation scheme over a predefined time window.

%Note that the uniqueness of the NNE is valid under the certain strict conditions~\cite{unique_NNE}, which is out of the scope of this work.
%Next, we reformulate our problem as an unconstrained penalty problem and we propose an algorithm which converges to NNE. %It is well known that NNE are GNE but, in general, the converse is not true~\cite{LGNEP_1}.

%\vspace{-0.15cm}
\section{RL-LSTM for Self-organizing Resource Allocation}\label{learning}
%\vspace{-0.1cm}
The proposed game requires each SBS to learn a sequence of actions over the prediction time window $T$, at $t=0$, without any knowledge of future network states. This necessitates a learning approach with memory for storing previous states whenever needed while being able to learn a sequence of future network states. Employing LSTMs is therefore an obvious choice for learning as they are capable of generating data that is sequential in nature~\cite{LSTM, RNN_survey, aidin_ICC}. Consequently, we propose a novel sequence level training algorithm based on RL-LSTM that allows SBSs to learn a sequence of future actions at operation time based on a sequence of historic traffic load thus maximizing the sum of their future rewards.

%In what follows, we give a brief overview on LSTM followed by a detailed description of our proposed algorithm.

%A key feature of our framework is that it allows the prediction of a time series vector while taking into account the dependency among different time epochs (temporal dependence) and thus achieving LEWF between LTE and WLAN and multiple LTE-LAA operators on each of the unlicensed channels.
%~\cite{LSTM}
LSTMs are a special kind of ``deep" recurrent neural networks (RNNs) capable of storing information for long periods of time to learn the long-term dependency within a sequence~\cite{zaremba2014recurrent}. LSTMs process a variable-length sequence $\boldsymbol{y} = (y_1, y_2, \emph{...}, y_m)$ by incrementally adding new content into a single memory slot, with gates controlling the extent to which new content should be memorized, old content should be erased, and current content should be exposed. Unlike conventional one-step RL techniques (e.g., Q-learning), LSTM networks are capable of predicting a sequence of future actions~\cite{LSTM, aidin_magazine}. Predictions at a given time step are influenced by the network activations at previous time steps thus making LSTMs suitable for our application. The total number of parameters $W$ in a standard LSTM network with one cell in each memory block is given by:
\vspace{-0.1cm}
\begin{align}
W=n_c \times n_c \times 4 + n_i \times n_c \times4 + n_c \times n_o  + n_c \times 3
\end{align}
%\vspace{-0.1cm}
\noindent where $n_c$ is the number of memory cells, $n_i$ is the number of input units, and $n_o$ is the
number of output units. The computational complexity of learning
LSTM models per weight and time step is linear i.e., $O(1)$. Therefore,
the learning computational complexity per time step is $O(W)$~\cite{computational_complexity}.
%The learning time for a network with a relatively small number of inputs is dominated by the $n_c \times$ ($n_c$ + $n_o$) factor.
%On the other hand, the computational complexity becom
%For large number of output units and a large number of memory cells (for storing temporal contextual information), learning LSTM models become computationally expensive~\cite{computational_complexity}.

Consequently, we consider an end-to-end RL-LSTM based approach to train the network to find a mixed-strategy NE of the game $\mathcal{G}$. LSTMs have three types of layers, one input and one output layer as well as a varying number of hidden layers depending on the dataset under study.
For our dataset, adding more hidden layers does not improve performance and thus one layer is sufficient. Moreover, in order to allow a sequence to sequence mapping, we consider an encoder-decoder model. The encoder network takes the input sequence and maps it to a vector of a fixed dimensionality. The encoded representation is then used by the decoder network to decode the target sequence from the vector. Fig.~\ref{model_1} summarizes the proposed approach.
The traffic encoder takes as an input the historical traffic loads and learns a vector representation of the input time-series. The multi-layer perceptron (MLP) summarizes the input vectors into one vector. In our scheme, an MLP is required to encode all the vectors together since a particular action at time $t$ depends on the values of all other input vectors (i.e., traffic values of all SBSs and WLAN on all the unlicensed channels). The action decoder takes as an input the summarized vector to reconstruct the predicted action sequence. All SBSs have the same input vector for the traffic encoders and thus they share the same traffic encoders. On the other hand, SBSs learn different action sequences and thus different SBSs use different action decoders.

In the first step, we need to train the neural networks in order to learn the parameters of the algorithm that would maximize the proposed utility function. Therefore, the proposed algorithm is divided into \emph{two phases, the training phase followed by the testing phase}.
In the former, SBSs are trained offline before they become active in the network using the architecture given in Fig.~\ref{model_1}. The input dataset represents the WiFi traffic load distribution on the unlicensed channels as well as the SBSs traffic load collected over several days. On the other hand, the testing phase corresponds to the actual execution of the algorithm after which the parameters have been optimized and is implemented on each SBS for execution during run time. %after deployment.

\begin{figure}[t!]
  \begin{center}
  \vspace{-1.4cm}
  \centering
   \includegraphics[width=10cm]{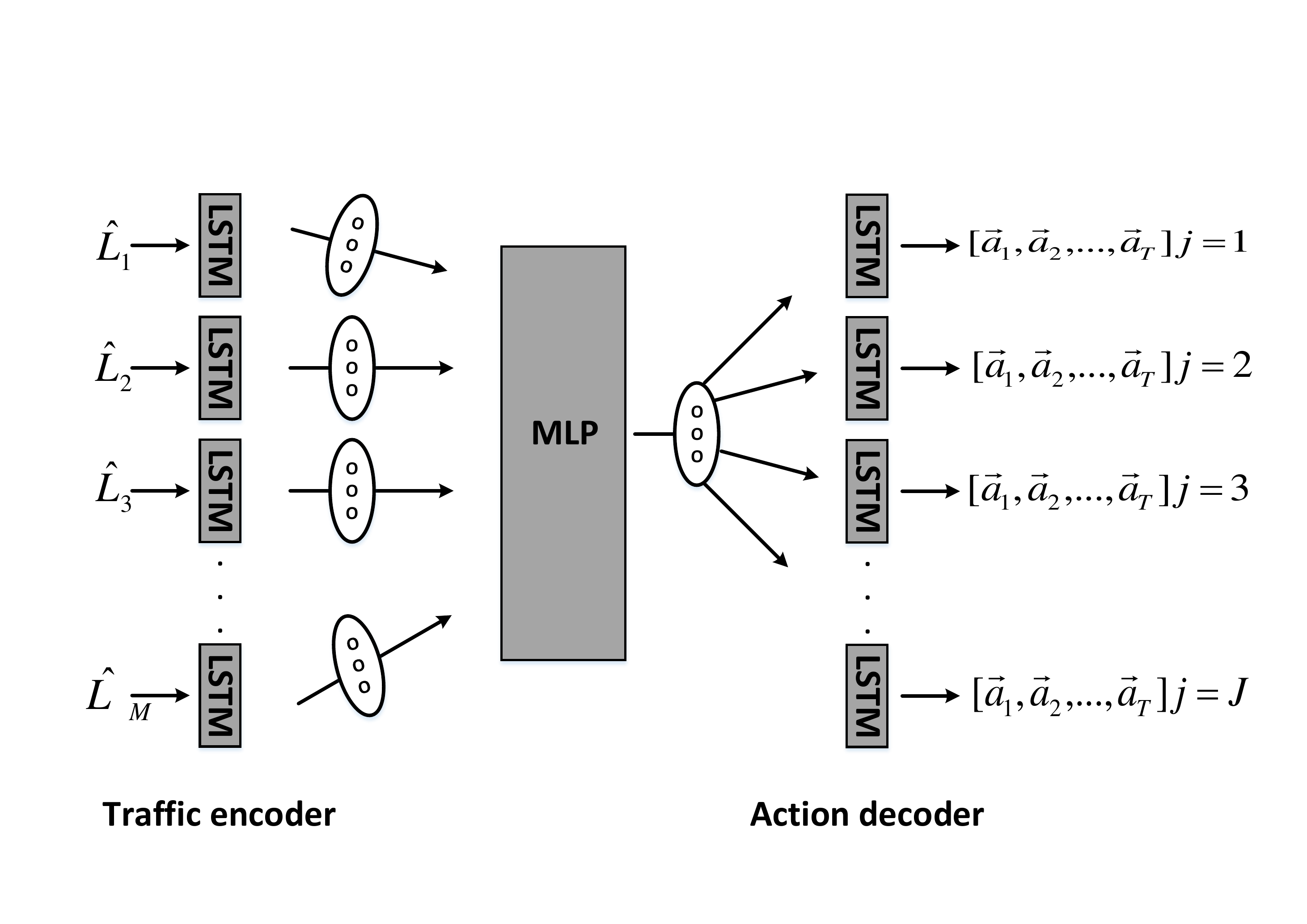} %\hspace*{10cm}
   \vspace{-0.6cm}
   \caption{Proposed framework.}\label{model_1}
      \vspace{-1.3cm}
  \end{center}
\end{figure}

For the training phase, we train the weights of our neural network using a policy gradient approach that aims at maximizing the expected return of a policy. This is achieved by representing the policy by its own function approximator and updating it according to the gradient of the expected reward with respect to the policy parameters~\cite{policy_gradient}.
%(based on stochastic gradient descent algorithm).
%The main advantage of the adopted policy gradient approach is its convergence to a local optimal policy~\cite{policy_gradient}.
Consider the set $\mathcal{M}$ of $M$ history traffic sequences corresponding to either an SBS or WiFi on each unlicensed channel, where $M=J+C$.
Let ${\boldsymbol{h}}_{m,t} \in \mathbb{R}^{n}$ and ${\boldsymbol{h}}_{j,t} \in \mathbb{R}^{n}$ be, respectively, the hidden vectors of the traffic encoder $m$ and action decoder of SBS $j$ at time~$t$. ${\boldsymbol{h}}_{m,t}$ and ${\boldsymbol{h}}_{j,t}$ are then computed by:
\vspace{0.1cm}
\begin{align}
{\boldsymbol{h} }_{m,t} \textrm{=} \phi \left( {\boldsymbol{v}}_{m,t}, {\boldsymbol{h}}_{m,t-1} \right), \;\; {\boldsymbol{h} }_{j,t} \textrm{=} \phi \left( {\boldsymbol{v}}_{j,t}, {\boldsymbol{h}}_{j,t-1} \right),
\end{align}
where $\phi$ refers to the LSTM cell
function~\cite{zaremba2014recurrent} being used, and
${\boldsymbol{v}}_{m,t}$ is the input vector. For the encoder,
${\boldsymbol{v}}_{m,t} = \left[\widehat{L}_{m,t}\right]$ is the history
traffic value. For the decoder,
${\boldsymbol{v}}_{j,t} = \left[{\boldsymbol{W}_d} \boldsymbol{e}(\boldsymbol{x}_{j,t-1})||\alpha_{j,c,t-1} \right]$
is the  vector of the previous predicted action where $\boldsymbol{e}()$
maps discrete value to a one-hot vector,
$\boldsymbol{W}_d \in \mathbb{R}^{n \times N_{x}}$ is a matrix that
is used to transform the discrete actions of each of the unlicensed channels into a vector,
and $N_{x}$ is the number of discrete actions.
In our approach, we learn the channel selection vector for all the channels simultaneously and
thus $\boldsymbol{x}_{j,t}=[x_{j,1,t}, \cdots, x_{j,C,t}]$.

%Then the transformed vector is used as the input of LSTM cell.

%To deal with the continuous action space of $\alpha_{j,c,t}$ and the large discrete action space of $x_{j,c,t}$ as $T$ increases, we consider Monte Carlo sampling. This in turn allows SBSs to learn more accurate transmission probabilities for $\alpha_{j,c,t}$, as opposed to fixed discretization, thus satisfying the fairness constraints. Therefore, the set of available actions $\mathcal{A}_j$ is initialized with $Z$ samples drawn from the proposed distribution.

To learn the mixed strategy of our proposed game, we need to initialize the action space with a subset of the continuous action space of $\alpha_{j,c,t}$.
A naive approach for working with continuous action spaces is to discretize the action space; however,
this approach would lead to combinatorial explosion and thus the well known problem of ``curse
of dimensionality" when highly discretizing our space and a loss in the accuracy of the predicted
action when considering less discretized values.
Therefore, we consider a sampling-based approach where we first define a probability distribution for the continuous variable $\alpha_{j,c,t}$ and for the discrete variable $x_{j,c,t}$
in order to deal with the large discrete action space as $T$ increases. We use a softmax classifier to predict the distribution for the discrete variable
$\boldsymbol{x}_{j,t}$ and a Gaussian policy for the distribution of the continuous variable $\alpha_{j,c,t}$. For the Gaussian policy, the probability of an action is
proportional to a Gaussian distribution with a parameterized mean and a fixed value for the variance in our implementation.
The variance of the Gaussian distribution defines the area around the mean from which we explore the action space.
For our implementation, the initial value of the variance is set to $0.06$ in order to increase exploration and then is decreased linearly towards $0.02$. Therefore, defining probability distributions for our variables allows the initialization of the action space $\mathcal{A}_j$ by sampling $Z$ actions from the proposed distributions. This enables the SBSs to
learn more accurate transmission probabilities for $\alpha_{j,c,t}$, as opposed to fixed discretization, thus satisfying the fairness constraints.
The hidden vector ${\boldsymbol{h}}_{j,t}$ in the decoder is used to predict the \mbox{$t$-th}~output actions $\boldsymbol{x}_{j,t}$ and $\alpha_{j,c,t}$. The probability vector over $\boldsymbol{x}_{j,t}$ and $\alpha_{j,c,t}$ can be defined, respectively, as:
%\begin{align}
%p\left( \boldsymbol{x}_{j,t} | \boldsymbol{x}_{j,<t} , %\alpha_{j,c,<t} , \widehat{\boldsymbol{L}}_{t} \right) = {\sigma %\left( \boldsymbol{W}_x {\boldsymbol{h}}_{j,t} \right)},
%\end{align}
\vspace{-0.7cm}
\begin{align}
\boldsymbol{x}_{j,t}| \boldsymbol{x}_{j,<t} , \alpha_{j,c,<t} , \widehat{\boldsymbol{L}}_{t} \sim {\sigma \left( \boldsymbol{W}_x {\boldsymbol{h}}_{j,t} \right)},
\end{align}
\vspace{-0.8cm}
\begin{align}
%p\left( \mu_{j,c,t} | x_{j,c,<t} , \mu_{j,c,<t} , \widehat{L}_{N,t} \right) = {\textrm{sigmoid} \left( \mathbf{W}_\mu {\mathbf{h}}_{t} \right)}
\mu_{j,c,t} = {S \left( \boldsymbol{W}_\mu {\boldsymbol{h}}_{j,t} \right)}, \;\;\alpha_{j,c,t}\sim \mathcal{N}(\mu_{j,c,t}, \mathrm{Var}(\alpha_{j,c,t})),
\end{align}

\noindent where $\mu_{j,c,t}$ and $\mathrm{Var}(\alpha_{j,c,t})$ correspond to the mean value and variance of the Gaussian policy respectively, $\boldsymbol{W}_x \in \mathbb{R}^{|V_a| \times n}, \boldsymbol{W}_\mu \in \mathbb{R}^{n}$ are parameters, $\sigma(.)$ is the softmax function $\sigma (\boldsymbol{b})_q =\frac{e^{b_q}}{\sum_{o=1}^O e^{b_o}}$ for $q=1, \cdots, O$, and $S(.)$ is the sigmoid function where $S(b) = \frac{1}{1+e^{-b}}$ and is used to normalize the value to $(0,1)$. $\alpha_{j,c,t}$ is computed only when $x_{j,c,t}=1$.
%, i.e., $p(\alpha_{j,c,t}|x_{j,c,t}=1, h_{j,t}) = \mathcal{N}(\mu_{j,c,t}, 1)$ and thus the joint probability of both variables can be defined by:
%\begin{align}\nonumber
%p\textrm{$($}\alpha_{j,c,t}, x_{j,c,t}\textrm{$=$}1 | h_{j,t}) \textrm{$=$} p(\alpha_{j,c,t}|x_{j,c,t}\textrm{$=$}1, h_{j,t}) \times p(x_{j,c,t}\textrm{$=$}1 | h_{j,t})
%\end{align}
The probability of the whole action sequence for SBS $j$, given a historic traffic sequence $\widehat{\boldsymbol{L}}$, $p_{j,\boldsymbol{a}_j|\widehat{\boldsymbol{L}}}$, is given by:
\begin{align}
p_{j,\boldsymbol{a}_j|\widehat{\boldsymbol{L}}} = \prod_{t = 1}^{T}{ p\left( (\boldsymbol{x}_{j,t}, \alpha_{j,c,t})| \boldsymbol{x}_{j,<t}, \alpha_{j,c,<t}, \widehat{\boldsymbol{L}}_{t}\right)},
%\vspace{-0.2cm}
\end{align}
where $\widehat{\boldsymbol{L}}_{t}\textrm{$=$}(\widehat{L}_{1,t}, \cdots, \widehat{L}_{M,t})$, $\boldsymbol{x}_{j,<t} \textrm{$=$} [\boldsymbol{x}_{j,1}, \cdots, \boldsymbol{x}_{j,t-1}]$, and $\mu_{j,c,<t} \textrm{$=$} [\mu_{j,c,1}, \cdots, \mu_{j,c,t-1}]$.

%For a given action sequence, we denote its reward function as $\widehat{u}_{j}(\boldsymbol{a}_{j},\boldsymbol{a}_{-j})$.
Our goal is to maximize the exact expectation of the reward $\widehat{u}_{j}(\boldsymbol{a}_{j},\boldsymbol{a}_{-j})$ over the action space for the training dataset. Therefore, the objective function can be defined as:
%\vspace{0.3cm}
\begin{align}\label{objective}
\max_{\mathbf{\boldsymbol{\emph{a}}_\emph{j} \in \mathcal{A}_\emph{j}}} \sum_{\mathcal{D}} \overline{u}_j(\boldsymbol{p}_j, \boldsymbol{p}_{-j}),
%=\sum_{\mathcal{D}}{\mathds{E}_{\boldsymbol{p}_{j}| \widehat{\boldsymbol{L}} }\left[\widehat{u}_{j}\left( \boldsymbol{a}_{j},\boldsymbol{a}_{-j} \right) \right]},
\end{align}
where $\mathcal{D}$ is the training dataset. For this objective function, the REINFORCE algorithm~\cite{reinforce} can be used to compute the gradient of the expected reward with respect to the policy parameters, and then standard gradient descent optimization algorithms~\cite{policy_gradient} can be adopted to allow the model to generate optimal action sequences for input history traffic values. Specifically, Monte Carlo sampling is adopted to compute the expectation.

In particular, we adopt the RMSprop gradient descent optimization algorithm for the update rule~\cite{rmsprop}. The learning rate of a particular weight is divided by a running average of the magnitudes of recent gradients for that weight. The RMSprop update rule is given by:
\begin{align}
\mathds{E}[g^2]_{t}=\gamma \mathds{E}[g^2]_{t-1}+ (1-\gamma) g_t^2,
\end{align}
\vspace{-0.9cm}
\begin{align}
\theta_{t+1}=\theta_t - \frac{\lambda}{\sqrt{\mathds{E}[g^2]_{t}+\epsilon}}g_t,
\end{align}
where $\theta_t$ corresponds to the model parameters at time $t$, $g_{t}$ is the gradient of the objective function with respect to the parameter $\theta$ at time step $t$, $\mathds{E}[g^2]_{t}$ is the expected value of the magnitudes of recent gradients, $\gamma$ is the discount factor, $\lambda$ is the learning rate and $\epsilon$ is a smoothing parameter.

\begin{algorithm}[t!] \scriptsize
\caption{Training phase of the proposed approach.}
\label{Training_RL_LSTM_algorithm}
\begin{algorithmic}[t!]
\STATE \textbf{Input}: $~\mathcal{J}; \mathcal{W}; \mathcal{C}; \widehat{L}_{j,t} \forall j \in \mathcal{J}, t; \widehat{L}_{w,c,t} \forall c\in \mathcal{C}~, t$.
\STATE \textbf{\emph{Initialization}:} The weights of all LSTMs are initialized following a uniform distribution with arbitrarily small values.
\STATE \textbf{\emph{Training}:} Each SBS $j$ is modeled as an LSTM network.
% that are used to map the history traffic input sequence into a vector of fixed dimensionality for its own and other LTE-LAA SBSs history traffic load ($\widehat{L}_j$, $\widehat{L}_i$ ($i\neq j$) and the history of WLAN load on each channel $c$, $\widehat{L}_{w,c}$.
\WHILE{Any of the coupled constraints is not satisfied}
\FOR {Number of training epochs}
\FOR{Size of the training dataset}
%\FOR{each time horizon $T$}

%\STATE \textbf{Step i.} Each LSTM network takes as an input the traffic history load and maps it into a vector.

%\STATE \textbf{Step ii.} The output vectors of the LSTM networks are fed into an MLP in order to summarize the input vectors into one vector.

\STATE \textbf{Step 1.} Run Algorithm \ref{Testing_RL_LSTM_algorithm} to compute the best actions for all SBSs.
%\FOR {each $e$ iterations}
%\STATE \textbf{Step 2.} Using the incremental penalty algorithm, check the feasibility of the coupled constraints and update the values of $\boldsymbol{\rho}_l$ accordingly.
%\ENDFOR
\FOR{$j$=1:$J$}
\STATE \textbf{Step 2.} Sample actions for SBS $j$ based on the best expected actions of other SBSs. %following the best response strategy
\STATE \textbf{Step 3.} Use REINFORCE~\cite{reinforce} to update rule and compute the gradient of the expected value of the reward function.
\STATE \textbf{Step 4.} Update model parameters with back-propagation algorithm~\cite{BPalgorithm}.% for each batch size~\cite{BPalgorithm}.%Compute the expectation value of the reward function based on policy gradient RL. %This is accomplished by fixing the actions of other LTE-LAA SBSs $i$ and sampling the action space of LTE-LAA SBS $j$ (policy gradient RL).

%\STATE \textbf{Step 5.} Apply the back propagation algorithm in order to optimize the weights/parameters of the LSTM action decoder. % in order to maximize the expectation of the reward function.
\ENDFOR
\ENDFOR
\ENDFOR

\textbf{Step 5.} Using the incremental penalty algorithm, check the feasibility of the coupled constraints and update the values of $\boldsymbol{\rho}_l$ accordingly.
\ENDWHILE
%\textbf{Step 5.} Using the incremental penalty algorithm, check the feasibility of the coupled constraints and update the values of $\boldsymbol{\rho}_l$ accordingly. If all the coupled constraints are satisfied, then terminate.
\end{algorithmic}
\end{algorithm}

\begin{algorithm}[t!] \scriptsize
\caption{Testing phase of the proposed approach.}
\label{Testing_RL_LSTM_algorithm}

\begin{algorithmic}[t!]
\STATE \textbf{Input:} $~\mathcal{J}; \mathcal{W}; \mathcal{C}; \widehat{L}_{j,t} \forall j \in \mathcal{J}, t; \widehat{L}_{w,c,t} \forall c\in \mathcal{C}~, t$.

\FOR{For each SBS $j$}
\STATE \textbf{Step 1. \emph{Traffic history encoding}:}~
The history traffic of each SBS and WLAN activity on each channel is fed into each of the $M$ LSTM traffic encoders.

\STATE \textbf{Step 2. \emph{Vector summarization}:}~The encoded vectors are transformed to initialize action decoders.
%are summarized into one vector using MLP.

\STATE \textbf{Step 3. \emph{Action decoding}:}~Action sequence is decoded for each SBS $j$.
%by conditioning on the history traffic loads.

%The transformed vector is fed into a LSTM network which decodes a sequence of actions for the following time window $T$. %based on the history traffic loads.
\ENDFOR
\end{algorithmic}
\end{algorithm}
%\vspace{-0.5cm}
%\vspace{-0.2cm}
%\vspace{-0.2cm}

On the other hand, the testing phase corresponds to the actual execution of the algorithm on each SBS. Based on historical traffic values, each SBS learns the future sequence of actions based on the learned parameters from the training phase.
%Note that during the testing phase $\alpha_{j,c,t} = \mu_{j,c,t}$.
%Therefore, each SBS must know the history traffic values of neighboring SBSs and WLAN on each of the unlicensed channels.
For practicality, we assume knowledge of historical measurements of the WiFi activity on each of the unlicensed channels using simple network management protocol (SNMP) statistics with accurate calibration~\cite{icnp_07}
%\footnote{\ucb{We assume knowledge of WLAN historical traffic for each of the $C$ unlicensed channels using Simple Network Management Protocol (SNMP) statistics~\cite{SNMP}. To realize this, an SNMP agent with multiple interfaces can run on each SBS to monitor the traffic values (i.e., byte counts) for periods of five minutes. This traffic is the total amount of bytes received and sent from all clients associated with each WAP during a particular time interval. Note that an ongoing calibration of measurements is crucial for maintaining a good quality of the network performance.}}~\cite{icnp_07}
and of other SBSs by exchanging past traffic information via the X2 interface as done in~\cite{mswim} and~\cite{CU_LTE}. For our proposed scheme, the SBSs are trained over a large training dataset taking into account the traffic load over multiple days. The likeliness that an error occurs at the same time over multiple days is thus very rare. Moreover, our proposed scheme takes into account a \emph{sequence} of history traffic values. Therefore, in case of non-ideal information, the impact of this error can be considered to be negligible. The proposed approach can also be combined with online machine learning~\cite{online_learning} to accommodate changes in the traffic model, by properly re-training the developed learning mechanism. Consequently, the proposed algorithm offers a practical solution that is amenable to implementation. Here, we note that one practical challenge for deploying this algorithm in a real-world network is synchronization between SBSs and WAPs. In essence, such synchronization can be achieved by inter-operator cooperation, using mechanisms such as in~\cite{omid_sync}. The training and the testing phases are given in Algorithms~\ref{Training_RL_LSTM_algorithm} and~\ref{Testing_RL_LSTM_algorithm} respectively.

Note that guaranteeing the convergence of the proposed algorithm is challenging as it is highly dependent on the hyperparameters used during the training phase. It has been shown in~\cite{learning_rate} that the learning rate and the hidden layer size
are the two most important hyperparameters for the convergence of LSTMs. For instance, using too few neurons in the hidden layers results in underfitting which could
make it hard for the neural network to detect the signals in a complicated data set.
On the other hand, using too many neurons in the hidden layers can result in either overfitting~\cite{overfitting} or an increase in the training time. Therefore, in this work, we limit our contribution to providing simulation results (see Section~\ref{simulation_section}) to show that, under a reasonable choice of the hyperparameters, convergence is observed for our proposed game, as per the following theorem:

\begin{theorem}~\label{theorem_NNE} \emph{If Algorithm~\ref{Training_RL_LSTM_algorithm} converges, then the convergence strategy profile corresponds to a mixed-strategy NE of game $\mathcal{G}$.}
\end{theorem}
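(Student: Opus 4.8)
The plan is to read Algorithm~\ref{Training_RL_LSTM_algorithm} as a simultaneous policy-gradient ascent, one ascent per SBS, and to exploit a simple structural fact: with every other SBS's mixed strategy held fixed, each SBS's expected (penalized) payoff is \emph{linear} in that SBS's own mixed strategy, so any profile at which the updates come to rest is automatically a mutual best response. First I would unpack what ``convergence'' of Algorithm~\ref{Training_RL_LSTM_algorithm} delivers. Exiting the outer loop means the coupled constraints (\ref{cons_4}), (\ref{cons_6}) and (\ref{cons_7}) hold and the penalty vector has settled at $\boldsymbol{\rho}_l^{*}$; exiting the inner training loops means that for every SBS $j$ the network weights, and hence the induced mixed strategy $\boldsymbol{p}_j^{*}\in\Delta(\mathcal{A}_j)$ over the sampled/discretized action space, are no longer moved by the REINFORCE/RMSprop update. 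On the feasible set cut out by (\ref{cons_4}), (\ref{cons_6}) and (\ref{cons_7}), the three penalty terms in (\ref{penalized_utility}) vanish, so $\widehat{u}_j$ coincides with the true objective of $\mathcal{G}$; by the incremental-penalty guarantee of~\cite{fukushima} it therefore suffices to prove that $\boldsymbol{p}^{*}=(\boldsymbol{p}_1^{*},\dots,\boldsymbol{p}_J^{*})$ is a mixed-strategy NE of the penalized, decoupled game.

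Next I would make the linearity precise. Fixing $j$ and the opponents at $\boldsymbol{p}_{-j}^{*}$, the definition of $\overline{u}_j$ can be rewritten as
\begin{align}
\overline{u}_j(\boldsymbol{p}_j,\boldsymbol{p}_{-j}^{*})=\sum_{\boldsymbol{a}_j\in\mathcal{A}_j} p_{j,\boldsymbol{a}_j}\,c_{\boldsymbol{a}_j},\qquad c_{\boldsymbol{a}_j}=\mathbb{E}_{\boldsymbol{a}_{-j}\sim\boldsymbol{p}_{-j}^{*}}\big[\,\widehat{u}_j(\boldsymbol{a}_j,\boldsymbol{a}_{-j})\,\big],
\end{align}
which is a linear functional on the simplex $\Delta(\mathcal{A}_j)$ whose maximizers are exactly the distributions supported on $\arg\max_{\boldsymbol{a}_j} c_{\boldsymbol{a}_j}$. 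The core step is then to argue that a policy the update leaves fixed must be one such maximizer. In Steps~2--4 of Algorithm~\ref{Training_RL_LSTM_algorithm}, SBS $j$ runs stochastic gradient ascent on $\overline{u}_j(\boldsymbol{p}_j,\boldsymbol{p}_{-j}^{*})$ through its softmax-over-$\boldsymbol{x}_{j,t}$ and Gaussian-over-$\alpha_{j,c,t}$ parametrization; because the objective is linear in $\boldsymbol{p}_j$, the gradient always pushes probability mass toward higher-$c_{\boldsymbol{a}_j}$ actions, so it can cease to move $\boldsymbol{p}_j^{*}$ only once all the mass sits on payoff-maximizing actions (the advantage is zero on the support, or the softmax has saturated onto the best-response face). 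Hence $\overline{u}_j(\boldsymbol{p}_j^{*},\boldsymbol{p}_{-j}^{*})\ge\overline{u}_j(\boldsymbol{p}_j,\boldsymbol{p}_{-j}^{*})$ for all $\boldsymbol{p}_j\in\Delta(\mathcal{A}_j)$.

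Since Algorithm~\ref{Training_RL_LSTM_algorithm} updates every SBS against the others' current policies and, at convergence, all $\boldsymbol{p}_j^{*}$ are fixed simultaneously, the inequality above holds for every $j\in\mathcal{J}$. Carrying these inequalities back to $\mathcal{G}$ through the first step---the vanishing of the penalty terms together with the penalty-method equivalence---shows that $\boldsymbol{p}^{*}$ meets the definition of a mixed-strategy NE of $\mathcal{G}$, which finishes the argument.

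The step I expect to be the real obstacle is the claim that the softmax-parametrized stochastic ascent, when it ``converges,'' actually sits at a maximizer of the linear objective rather than at a spurious critical point---a saddle, or a configuration of full indifference at a sub-optimal value. Linearity makes any genuine local maximum global, so the difficulty is entirely in certifying that the fixed point reached is a (local) maximizer and not merely a stationary point of the parametrized objective; this cannot be guaranteed a priori for LSTM training, which is exactly why the statement is, and should be, conditioned on convergence. A secondary but more routine matter is bookkeeping the reduction from the original GNEP to the penalized NEP and the continuous-to-sampled discretization of $\alpha_{j,c,t}$, so that the equilibrium established for the penalized, discretized game is the one claimed for $\mathcal{G}$.
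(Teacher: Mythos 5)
Your overall architecture matches the paper's: both proofs (i) invoke the incremental-penalty guarantee of~\cite{fukushima} to argue that at termination the coupled constraints (\ref{cons_4}), (\ref{cons_6}), (\ref{cons_7}) hold, so that constraint-violating strategies are dominated and the penalized NEP stands in for the original GNEP $\mathcal{G}$, and (ii) argue that a profile at which every player's gradient update has come to rest is a mutual best response, hence a mixed-strategy NE. Where you genuinely diverge is in how step (ii) is justified. The paper's key step is its Lemma~\ref{lemma_1}, which asserts that each penalized payoff (\ref{penalized_utility}) is convex as an ``affine combination'' of squares of linear functions, and then leans on~\cite{NE_gradient} together with the Robbins--Monro step-size conditions to conclude that gradient play lands on an equilibrium rather than an arbitrary critical point. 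You instead use the multilinearity of the expected utility: with $\boldsymbol{p}_{-j}^{*}$ fixed, $\overline{u}_j(\cdot,\boldsymbol{p}_{-j}^{*})$ is linear on the simplex $\Delta(\mathcal{A}_j)$, so every local maximizer is global and is supported on best-response actions. Your route is more elementary and arguably on firmer ground---an affine combination of convex functions need not be convex, and for a maximization problem one would want concavity in any case---whereas linearity in one's own mixed strategy holds automatically for any finite game. Both arguments share the same residual weak point, which you alone name explicitly: a stationary point of the softmax/Gaussian-parametrized REINFORCE objective need not maximize the linear objective over the simplex (a saturated softmax on a suboptimal action already has vanishing policy gradient), and the hypothesis that Algorithm~\ref{Training_RL_LSTM_algorithm} converges does not by itself rule out such spurious fixed points. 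The paper absorbs this issue into Lemma~\ref{lemma_1} and its citations; you surface it as the genuine obstacle. Neither treatment closes it rigorously, so your proposal is no weaker than the published argument and is more transparent about where the conditional nature of the theorem is actually doing work.
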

\begin{proof}
In order to prove this theorem, we first need to show that the solution of the adopted multi-agent learning algorithm converges to an equilibrium point. In fact, every strict Nash equilibrium is a local optimum for a gradient descent learning approach but the reverse is not always true (Theorems 2 and 3 in~\cite{IMP_NE_gradient_descent}). Therefore, to show that a gradient-based learning method guarantees convergence of our proposed game to an equilibrium point, we define the following lemma.

\begin{lemma}~\label{lemma_1}
The square of a linear function is convex. It follows that the payoff function of player $j$ defined in (\ref{penalized_utility}) is an affine combination of convex functions, and hence is convex. Therefore, a gradient-based learning algorithm for our game $\mathcal{G}$ allows the convergence to an equilibrium point of that game.
\end{lemma}
Lemma~\ref{lemma_1} is the consequence of the convexity of the players' payoffs where it has been shown in~\cite{NE_gradient} that under certain convexity assumptions about the shape of payoff functions,
the gradient-descent process converges to an equilibrium point.
However, convergence is only guaranteed under a decreasing step-size sequence~\cite{robbins_monro}. Therefore, given the fact that we employ an adaptive learning rate method satisfying the Robbins-Monro conditions $(\lambda >0, \sum_{t=0}^\infty \lambda(t)=+\infty, \sum_{t=0}^\infty \lambda^2(t)<+\infty)$, one can guarantee that under suitable initial conditions, our proposed algorithm converges to an equilibrium point.
% following a decay rule based on the norm of the gradient,
%satisfy Robbins-Monro conditions ($\lambda >0$, $\sum_{t=0}^\infty \lambda(t)=+\infty$, $\sum_{t=0}^\infty \lambda^2(t)<+\infty$),

% for our game $\mathcal{G}$.
% Therefore, we can conclude that the solution of our proposed algorithm corresponds to an equilibrium point of our game $\mathcal{G}$.

%[add references and talk about gradient descent in neural networks]

%http://math.stackexchange.com/questions/34009/is-the-variance-of-a-convex-function-a-convex-function
%The proof of this theorem follows from the notion of the stochastic approximations where it has been shown in~\cite{} that adapting a policy gradient RL approach allows the convergence to a NE.

%[\ucc{Given that ..., now we need to show that this equilibrium point corresponds to a mixed-strategy NE for the game $\mathcal{G}$}]
%To formulate the exact value to which our proposed algorithm converges,

Moreover, following the penalized reformulation of our game $\mathcal{G}$, one can easily show that a strategy that violates the coupled constraints cannot be a best response strategy. From~\cite{fukushima}, there exists $\boldsymbol{\rho}_l^*$ such that the incremental penalty algorithm terminates. Therefore, there exists a mixed strategy for which the coupled constraints are satisfied at $\boldsymbol{\rho}_l^*$. In that case, there is no incentive for an SBS to violate any of the coupled constraints, otherwise, its reward function would be penalized by the corresponding penalty function. Hence, all strategies that violate the coupled constraints are dominated by the alternative of complying with these constraints. Since in the proposed algorithm, the optimal strategy profile results in maximizing ${\mathds{E}_{\boldsymbol{p}_{j}} \left[\widehat{u}_{j}\left( \boldsymbol{a}_{j},\boldsymbol{a}_{-j} \right)\right]}$, we can conclude that the converged mixed-strategy NE is guaranteed not to violate the coupled constraints and hence it corresponds to a mixed-strategy NE for the game $\mathcal{G}$.
Therefore, our proposed learning algorithm learns a mixed strategy of the game $\mathcal{G}$, by using a deep neural network function approximator to represent strategies, and by averaging those strategies via gradient descent machine learning techniques.

%\ucc{mention the last sentence in the proof of mingzhe.}

%stochastic approximation approach:
%\ucc{[Herbert Robbins and Sutton Monro. A stochastic approximation method]}
%step 1. based on stochastic approximation methods, show that policy gradient allows the convergence to the max expectation of the utility function (equation 25). Use Robbins Monro algorithm for stochastic approximation of RL policy gradient. step 2.

% and hence this mixed NE is GNE of the game.
%Therefore, the converged mixed strategy profile is a GNE.

%Since all mixed NEs are GNEs and that Algorithm~\ref{Training_RL_LSTM_algorithm} converges to a NE, as a result, Algorithm~\ref{Training_RL_LSTM_algorithm} converges to a GNE for the proposed game.
%Following the definition of a mixed strategy Nash Equilibrium:
%\begin{align}
%\end{align}

%Moreover, following the criteria for convergence to a NNE for a given GNEP, one can easily show that the converged GNE is effectively a NNE since the associated penalty coefficients for each coupled constraint are equal among all players.
\end{proof}

\vspace{-0.2cm}
\begin{table}[t!]\footnotesize
\setlength{\belowcaptionskip}{0pt}
\setlength{\abovedisplayskip}{3pt}
\captionsetup{belowskip=0pt}
\newcommand{\tabincell}[2]{\begin{tabular}{@{}#1@{}}#1.6\end{tabular}}
 \setlength{\abovecaptionskip}{2pt}
 \renewcommand{\captionlabelfont}{\small}
\caption[table]{\scriptsize{\\SYSTEM PARAMETERS}}\label{parameters}
\centering
\begin{tabular}{|c|c|c|c|}
\hline
\textbf{Parameters} & \textbf{Values} & \textbf{Parameters} & \textbf{Values} \\
\hline
Transmit power ($P_t$) & 20 dBm & BW (channel) & 20 MHz \\
\hline
CCA threshold & -80 dBm & Noise variance & 92 dBm/Hz \\
\hline
Path loss & $15.3+50\log_{10}(m)$ & SIFS & 16 $\mu$s \\
\hline
Hidden size (encoder) & 70 & DIFS & 34 $\mu$s\\
\hline
Hidden size (decoder) & 70 & $\mathrm{CW_{min}}$ & 15 slots\\
\hline
time epoch ($t$) & 5 min & $\mathrm{CW_{max}}$ & 1023 slots\\
\hline
Action sampling ($Z$) & 100 samples & ACK & 256 bits\\
\hline
History traffic size & 7 time epochs &  $P_{\mathrm{LTE}},P_{\mathrm{WiFi}}$ & 1, 1\\
\hline
Learning rate ($\lambda$) & 0.01 & LSTM layers & 1\\
\hline
Learning rate decay ($\gamma$) & 0.95 & $t_{\mathrm{max}}$ & 0.9\\
\hline
\end{tabular}
\vspace{-0.5cm}
\end{table}
%\vspace{-0.1cm}

\vspace{-0.8cm}
\section{Simulation Results and Analysis}\label{simulation_section}
For our simulations, we consider a $300$~m $\times$ $300$~m square area in which we randomly deploy a number of SBSs and 7 WAPs that share $7$ unlicensed channels. We use real data for traffic loads from the dataset provided in~\cite{IBM_dataset} and divide it as $80$\% for training and $20$\% for testing. During the training phase, we randomly shuffle examples in the training dataset in order to prevent cycles when approximating the reward function. Table~\ref{parameters} summarizes the main simulation parameters. All statistical results are averaged over a large number of independent runs.

\begin{figure}[t!]
  \begin{center}
  \vspace{-0.2cm}
    \includegraphics[width=11cm]{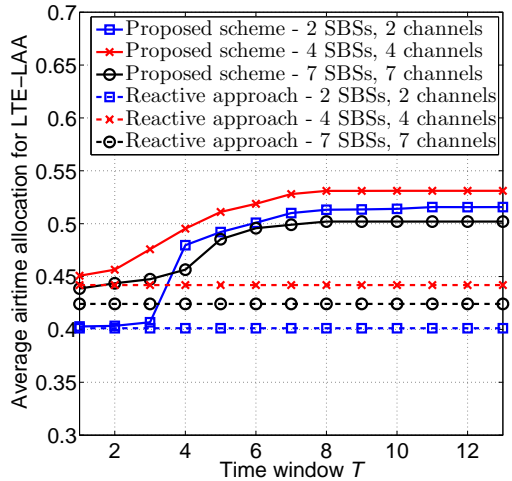}
    %\vspace{-0.6cm}
   \caption{The average throughput gain for LTE-LAA upon applying a proactive approach (with varying $T$) as compared to a reactive approach.}\label{proactive_reactive} % Results are shown for different time horizons ($T$) and for three different network scenarios.}\label{proactive_reactive} %while considering three different network scenarios: (i) 2 SBSs and 2 unlicensed channels, (ii) 4 SBSs and 4 unlicensed channels and (iii) 7 SBSs and 7 unlicensed channels.}\label{proactive_reactive}
   \vspace{-1.2cm}
  \end{center}
\end{figure}

Fig.~\ref{proactive_reactive} shows the average throughput gain, compared to a reactive approach, achieved by the proposed approach for different values of $T$
under three different network scenarios. Here, we note that, in Fig.~\ref{proactive_reactive}, the case in which $T=1$ corresponds to other proactive schemes such as exponential smoothing and conventional reinforcement learning algorithms (e.g., Q-learning and multi-armed bandit)~\cite{RL_intro}. Intuitively, a larger ˜$T$ provides the framework additional opportunities to benefit over the reactive approach, which does not account for future traffic loads. First, evidently, for very small time windows, our proposed approach does not yield any significant gains.
However, as $T$ increases, LTE-LAA network utilizes statistical predictions for allocating resources and thus the gains start to become more pronounced as compared to the reactive approach as well as to other proactive approaches at $T=1$. For example, from Fig.~\ref{proactive_reactive}, we can see that, for $4$ SBSs and $4$ channels, our proposed scheme achieves an increase of 17\% and 20\% in the average airtime allocation for LTE-LAA as compared to other proactive schemes and the reactive approach, respectively. Eventually, as ˜$T$ grows, the gain of our proposed framework remains almost constant at the maximum achievable value. This corresponds to the minimum value of $T$ required to allow the LTE-LAA network smooth out its load over time and thus achieve maximum gain while guaranteeing fairness to WLAN.

%can only limit potential scheduling opportunities and reduce the system gain
%We also observe that proactive scheduling considerably smooths out the load over time

%(e.g., 18.5\% for the case of 7 SBSs and 7 channels).

\begin{figure}[t!]
  \begin{center}
   %\vspace{-0.1cm}
    \includegraphics[width=11cm]{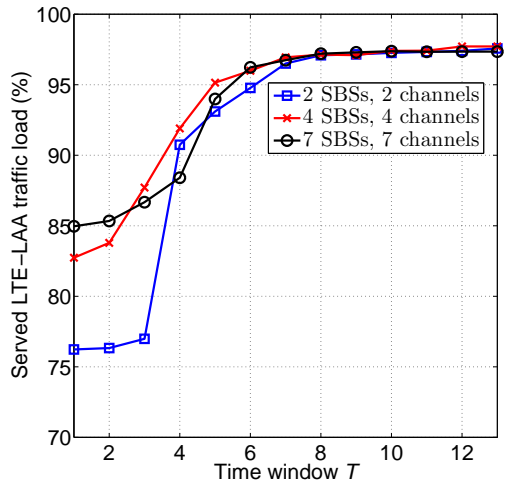}
    \vspace{-0.1cm}
   \caption{The proportion of load served over LTE-LAA as a function of $T$.}\label{offered_served} % Results are shown for three different network scenarios.}\label{offered_served} %: (i) 2 SBSs and 2 unlicensed channels, (ii) 4 SBSs and 4 unlicensed channels and (iii) 7 SBSs and 7 unlicensed channels.}\label{offered_served}
     \vspace{-1cm}
  \end{center}
\end{figure}

Fig.~\ref{offered_served} shows the proportion of LTE-LAA served load for different values of $T$. Clearly, as $T$ increases, the proportion of LTE-LAA served traffic increases. For example, the proportion of served load increases from $82$\% to $97$\% for the case of $4$ SBSs and $4$ channels. The gain of the LTE-LAA network stems from the flexibility of choosing actions over a large time horizon $T$. In contrast to the myopic reactive approach, our proposed proactive scheme takes into account future predictions of the network state along with the current state. Therefore, the optimal policy will balance the instantaneous reward and the available information for future use and thus maximizing the total load served over time. %\ucc{Based on the results given in Fig.~\ref{proactive_reactive} and Fig.~\ref{offered_served}, we can see that $T=8$ is a suitable value of $T$ for the studied dataset.}

%Smaller values of $T$ incur loss in terms of network performance, while on the other hand, larger values of $T$ result in more complexity and larger convergence time.

%Instantaneous actions are taken based on the current and future predictions of the traffic as opposed to a reactive approach that considers the current network state only.
%\vspace{-0.3cm}

\begin{figure}
\vspace{-0.5cm}
\begin{subfigure}{1.0\textwidth}
  \centering
  \includegraphics[width=12cm]{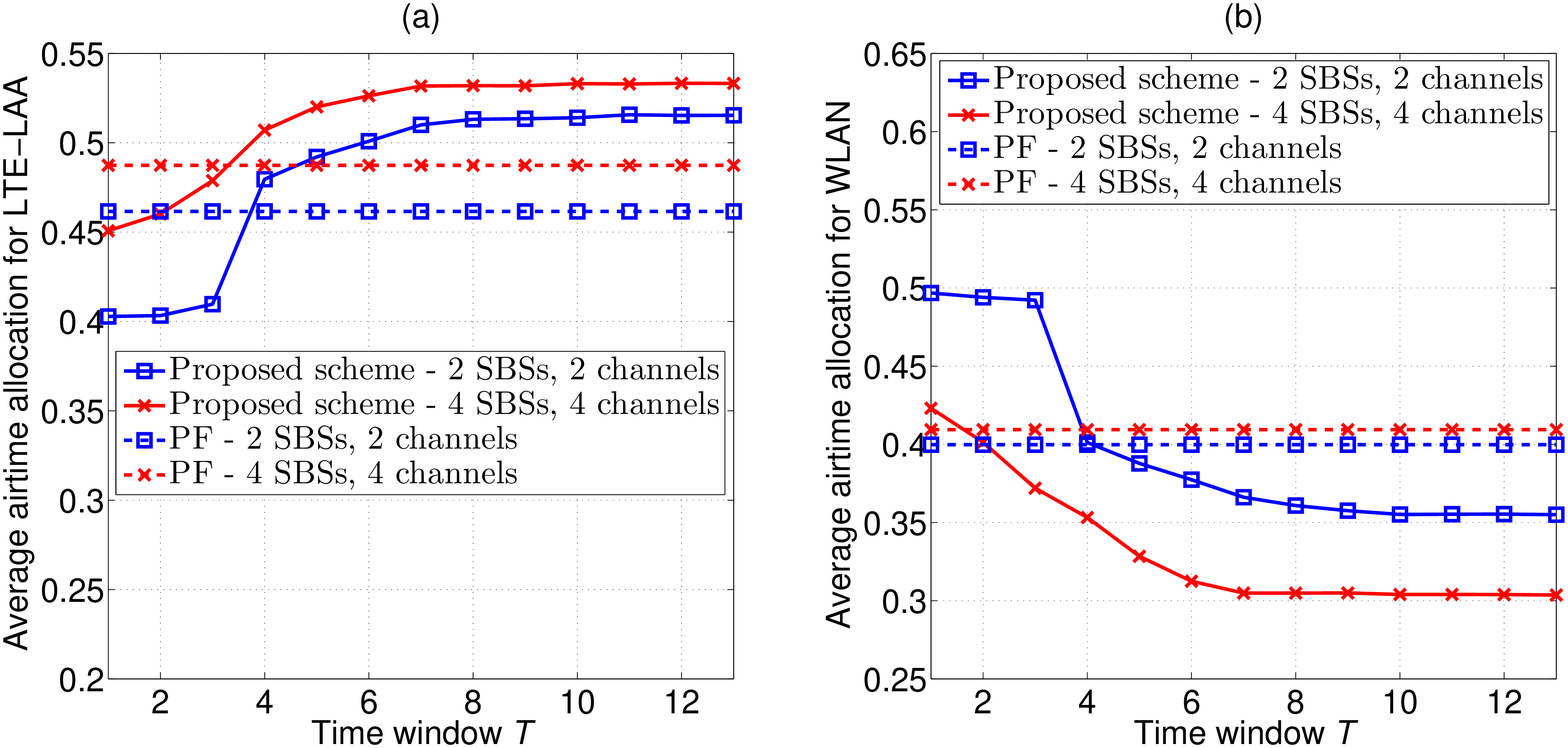}
  %\caption{1a}
  \label{fig:sfig1}
\end{subfigure}\\
\begin{subfigure}{1.0\textwidth}
  \centering
  \hspace{-0.15cm}\includegraphics[width=12cm]{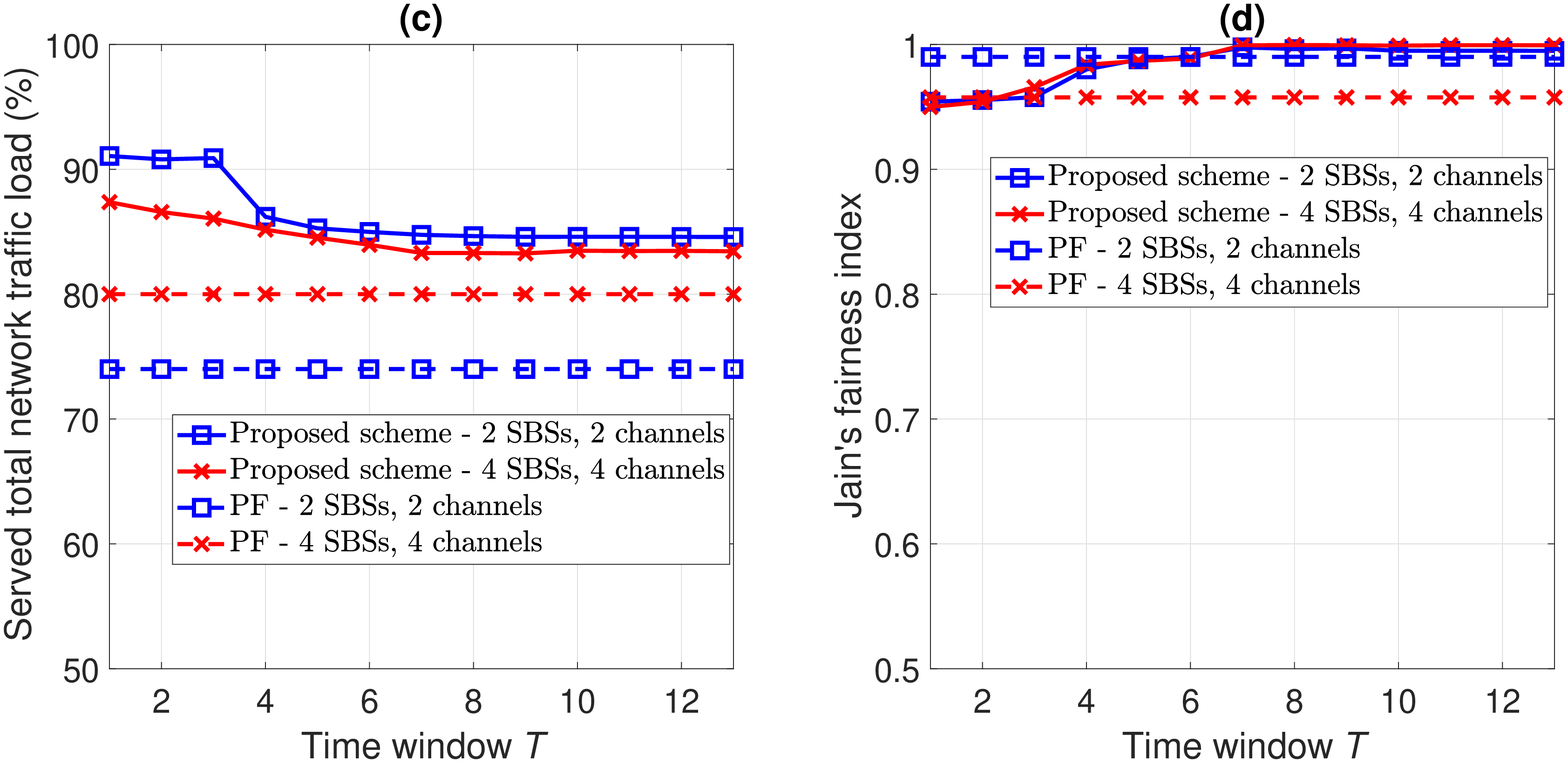}
  %\caption{1b}
  \label{fig:sfig2}
\end{subfigure}
\vspace{-0.1cm}
\caption{The (a) average airtime allocated for LTE-LAA, (b) average airtime allocated for WLAN, (c) proportion of served total network traffic load, and (d) Jain's fairness index resulting from our proposed scheme as well as from a centralized proportional fairness utility maximization scheme (with varying $T$).}
\label{PF_comparison}
\vspace{-0.8cm}
\end{figure}

Fig.~\ref{PF_comparison} shows the (a) average airtime allocated for the LTE-LAA network, (b) average airtime allocated for WLAN, (c) proportion of served total network traffic load, and (d) Jain's fairness index as a function of $T$ resulting from our proposed scheme as well as from a centralized solution considering a proportional
fairness (PF) utility function that is widely used for resource allocation~\cite{PF}, subject to constraints (\ref{cons_1})-(\ref{cons_5}) with $T=1$. Here, we compute the Jain's index based on the proportion of served traffic load for each network using $\mathcal{J}(l_o)=\frac{(\sum_{o=1}^O l_o)^2}{O \cdot \sum_{o=1}^O l_o^2}$, where $l_o$ is the proportion of served traffic load for network $o$ and $O$ is number of networks~\cite{jain}. The centralized solution of the PF resource allocation is obtained using the branch-and-bound algorithm in~\cite{bonmin}. From Fig.~\ref{PF_comparison} (a), we can see that for small
values of $T$, the PF allocation offers higher airtime allocation for the LTE-LAA network. For example, for the scenario of $4$ SBSs and $4$ channels, PF offers airtime gains of $7$\% and $5$\% as compared to our proposed approach for $T=1$ and $2$ respectively.
%This is mainly due to the centralized solution that offers an upper bound solution to the distributed approach.
However, as $T$ increases, our proposed scheme achieves more transmission opportunities for the LTE-LAA network as compared to the PF solution. For instance, for the scenario of
$2$ SBSs and $2$ channels, our proposed scheme achieves an increase of $11$\% in the transmission opportunities for $T\geq 8$. This gain stems from the proactive resource allocation approach that allows more flexibility in spectrum allocation as $T$ increases. From Figs.~\ref{PF_comparison} (b) and (c), we can see that, although the average airtime allocation for WiFi resulting from our proposed scheme is less than that of the PF scheme for $T>4$, the proportion of the total network traffic load served by our proposed scheme is higher than that of the PF scheme for all values of $T$ (e.g., 84\% for our proposed scheme as compared to 74\% for PF for the case of 2 SBSs and 2 channels and for $T>6$). Moreover, from Fig.~\ref{PF_comparison} (d), we can conclude that, as $T$ increases, our proposed scheme achieves similar fairness performance as that of PF. This is due to the fact that, for our proposed scheme, as $T$ increases, the proportion of LTE served traffic load increases while that of WiFi decreases eventually, converging to a constant value for $T>7$. In particular, a relatively large time window allows SBSs to exploit future off-peak hours on the unlicensed band and thus increasing their transmission opportunities. Therefore, at the convergence point, the proportion of served traffic load of both technologies is almost the same. In summary, our proposed scheme allows more transmission opportunities for LTE-LAA, increases the proportion of the total network served load while also preserving fairness with WiFi. It offers better tradeoff in terms of efficiency and fairness compared to the centralized PF allocation scheme. Note that the resulting problem for the PF solution is a mixed integer nonlinear optimization problem (MINLP) and therefore, finding its solution becomes challenging for larger network scenarios due to the polynomial computational complexity.
%In such cases, one can implement an algorithm based on a combination of branch and bound and successive convex relaxation techniques~\cite{CU_LTE} or develop a sub-optimal heuristic algorithm~\cite{PF_heuristic}, which is out of the scope of this work.

\begin{figure}
\vspace{-0.4cm}
\begin{subfigure}{1.0\textwidth}
  \centering
  \includegraphics[width=12cm]{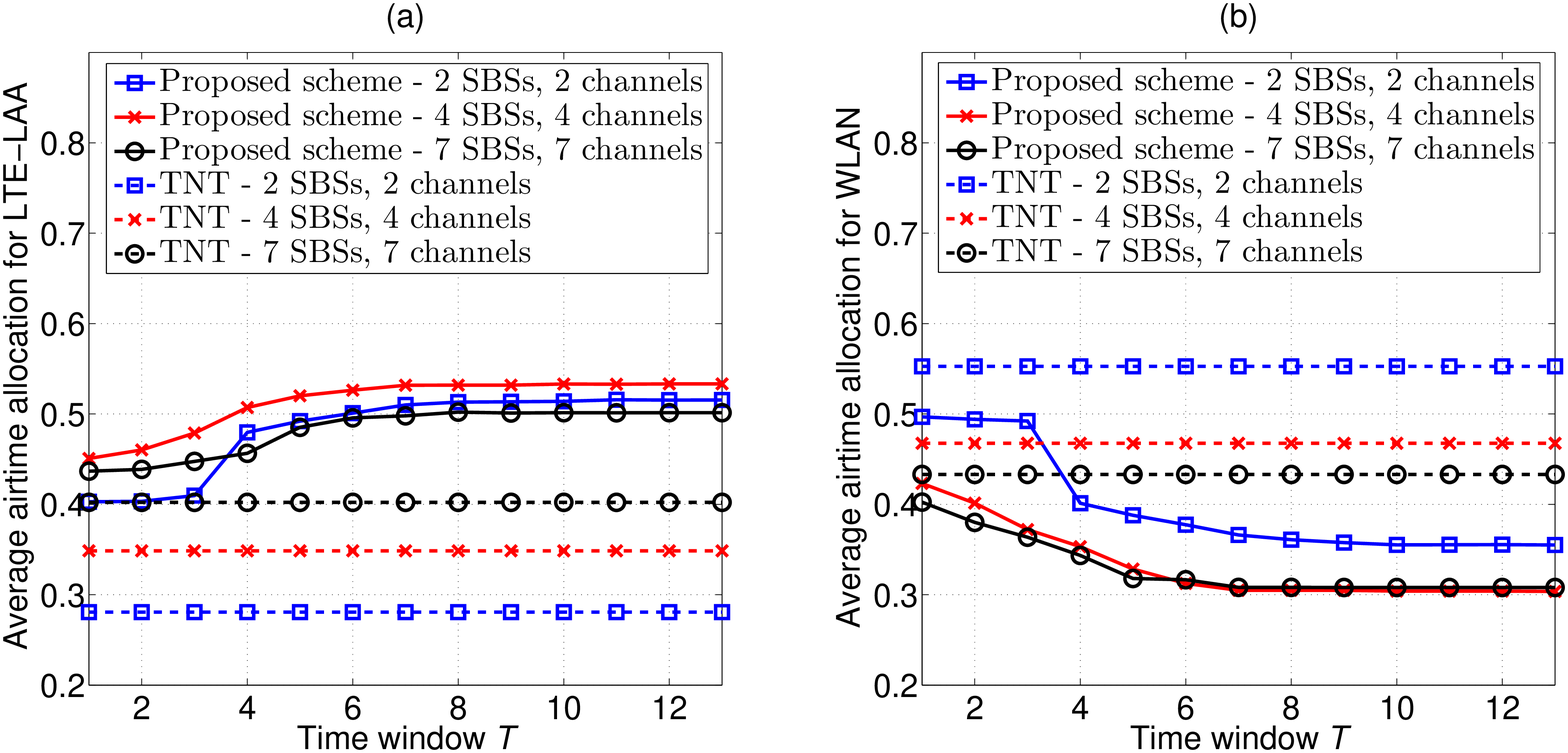}
  %\caption{1a}
  \label{fig:sfig1}
\end{subfigure}\\
\begin{subfigure}{1.0\textwidth}
  \centering
  \hspace{-0.25cm}\includegraphics[width=12cm]{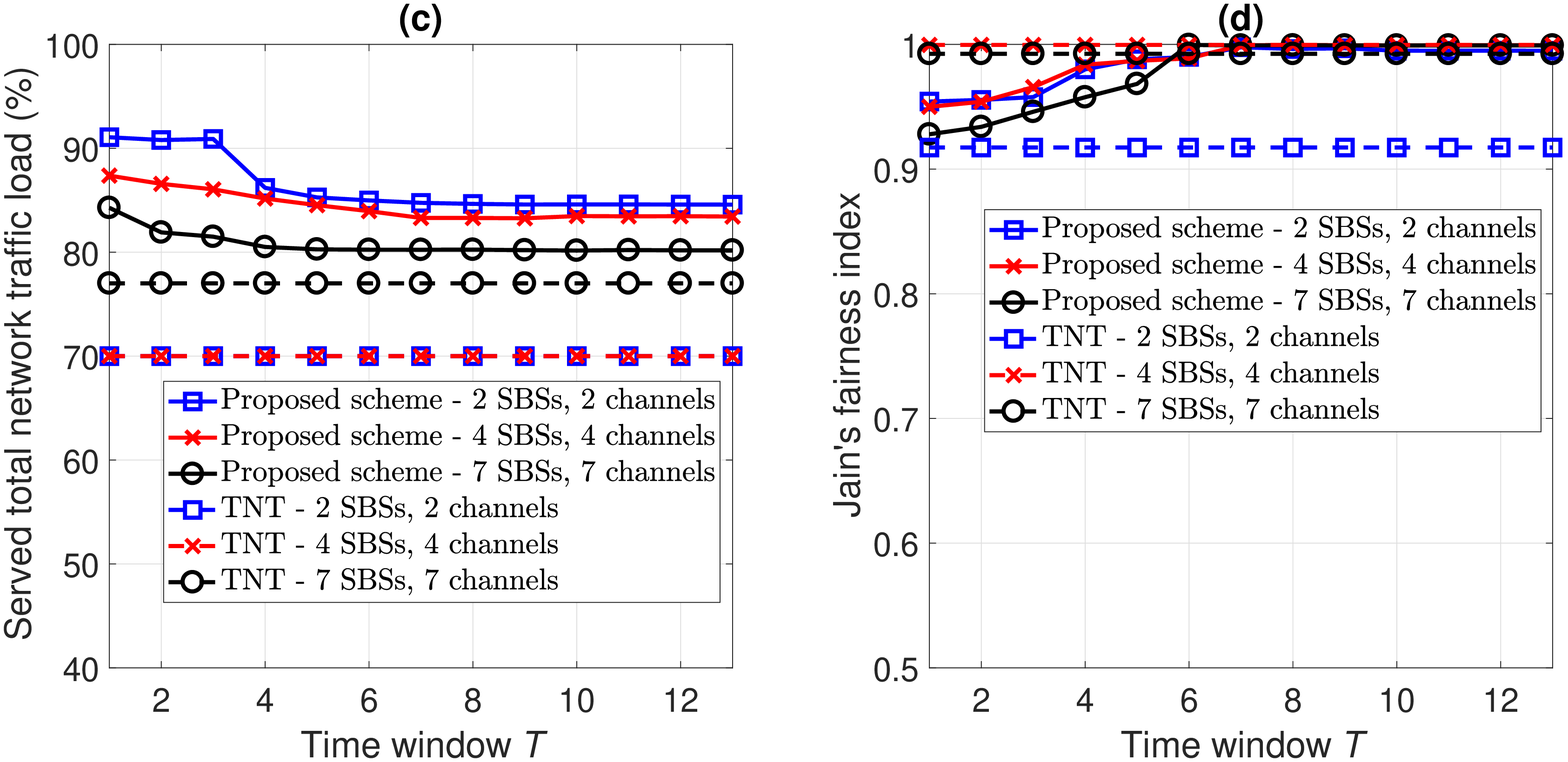}
  %\caption{1b}
  \label{fig:sfig2}
\end{subfigure}
\vspace{-0.1cm}
\caption{The (a) average airtime allocated for LTE-LAA, (b) average airtime allocated for WLAN, (c) proportion of served total network traffic load, and (d) Jain's fairness index resulting from our proposed scheme as well as from a centralized total network throughput utility maximization scheme (with varying $T$).}
\label{TNT_comparison}
\vspace{-0.7cm}
\end{figure}

Fig.~\ref{TNT_comparison} shows the (a) average airtime allocated for the LTE-LAA network, (b) average airtime allocated for WLAN, (c) proportion of served total network traffic load, and (d) Jain's fairness index as a function of $T$ resulting from our proposed scheme as well as a centralized solution considering a total network throughput (TNT) utility function subject to constraints (\ref{cons_1})-(\ref{cons_5}) with $T=1$. From Fig.~\ref{TNT_comparison} (a), we can see that our proposed resource allocation scheme offers higher transmission opportunities for LTE-LAA for all values of $T$ as compared to the centralized solution considering a TNT utility function. For example, for the case of $4$ SBSs and $4$ channels, the gain for our proposed approach can reach up to $52$\% for $T\geq 8$.
Similarly, from Figs.~\ref{TNT_comparison} (b) and (c), we can observe that, although the average airtime allocation for WLAN for our proposed scheme is less than that of the TNT scheme for all $T$, the proportion of the total served network traffic load for our proposed scheme is higher than that of the TNT scheme. From Fig.~\ref{TNT_comparison} (d), we can also conclude that, as $T$ increases, our scheme achieves similar fairness to TNT due to the fact that, as $T$ increases, the proportion of LTE served traffic load increases while that of WiFi decreases for our proposed scheme, converging to a constant value for $T>7$. At this convergence point, the proportion of served traffic load of both technologies is almost the same. In summary, our proposed scheme offers a better tradeoff in terms of efficiency and fairness as compared to the centralized TNT allocation scheme.

%This is due to the fact that the TNT utility function does not take fairness into account thus leading to a decrease in the LTE-LAA performance in the case of high WiFi offered load.

\begin{figure}[t!]
  \begin{center}
  \vspace{-0.7cm}
   %\hspace{-0.4cm}
   \includegraphics[width=14.5cm]{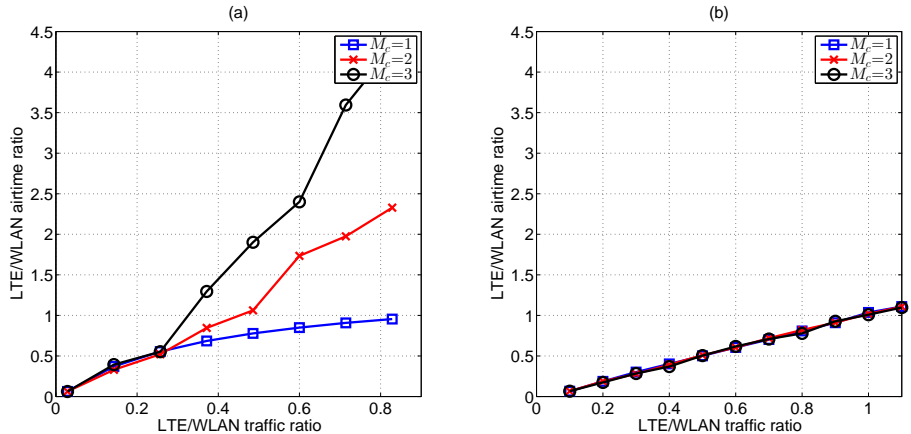}
    \vspace{-0.7cm}
   \caption{LTE/WLAN airtime ratio as a function of the LTE/WLAN
   traffic ratio for $3$ different values of $M_c$ ($M_c=1$, $2$ and $3$). The LTE and WLAN airtime fraction correspond to the average airtime allocated per SBS and per WAP, respectively. Moreover, the number of unlicensed channels is fixed to 7 and the number of SBSs is equal to $2$ and $7$ in (a) and (b) respectively.}\label{airtimeratio}
     \vspace{-1.2cm}
  \end{center}
\end{figure}

%\subsection{Performance assessment under different network parameters}
Fig.~\ref{airtimeratio} shows the value of the LTE/WLAN airtime ratio under varying LTE/WLAN traffic ratio and for different values of $M_c$. Note here that the LTE and WLAN airtime fraction correspond to the average airtime allocated per SBS and per WAP, respectively. We consider two different scenarios with varying number of SBSs ($2$ and $7$ SBSs for scenarios (a) and (b) respectively),
while the number of unlicensed channels is fixed to $7$. Fig.~\ref{airtimeratio} shows that inter-technology fairness is satisfied. This can be clearly seen in scenario (b) for the case of $M_c=1$. For instance, when the traffic ratio is $1$, LTE/WLAN airtime ratio is $1$ and thus equal weighted airtime is allocated for each technology (given that $P_{\mathrm{LTE}}=1$ and $P_{\mathrm{WiFi}}=1$).
%Note that $P_{\mathrm{LTE}}$, $P_{\mathrm{WiFi}}$ are network design parameter that can be adjusted in a way that would avoid LTE-LAA from aggressively offloading traffic to unlicensed bands in the case of high LTE-LAA traffic load.
From Fig.~\ref{airtimeratio}, we can also see that enabling carrier
aggregation impacts the resource allocation outcome. In fact, we can
see that a considerable gain in terms of spectrum access time can be
achieved with carrier aggregation. For instance, in the case of $2$ SBSs, the LTE/WLAN airtime ratio increases from $0.84$ for $M_c=1$ to $1.7$ and $2.4$ for $M_c=2$ and $3$ respectively for the value of $0.6$ for LTE/WLAN traffic ratio. On the other hand, this gain decreases as more SBSs are deployed and for a densely deployed LTE-LAA network, there is no need to aggregate more channels. This can be seen from (b) where the LTE-LAA network gets the same airtime share for $M_c=1$, $2$ and $3$ (as also shown in Remark~\ref{proposition_disjoint_channels}).
%This can be seen from the case of 4 SBSs and 4 channels where the LTE-LAA network gets the
%same airtime share for $M_c$=2 and 3 and in the case of 7 SBSs and 7
%channels where LTE-LAA gets the same airtime share for $M_c$=1, 2 and 3.

Moreover, Fig.~\ref{airtimeratio} shows that deploying more SBSs does not necessarily allow more airtime for the LTE-LAA network.
%On the other hand, the WLAN achieves more spectrum access time as more interfering SBSs are deployed.
For example, LTE/WLAN airtime ratio of scenarios (a) and (b)
corresponding to $0.6$ LTE/WLAN traffic ratio is equal to $0.84$ and $0.6$ respectively for $M_c=1$. Consequently, the proposed scheme can avoid causing performance degradation to WLAN in the case LTE operators selfishly deploy a high number of SBSs.

\begin{figure}[t!]
  \begin{center}
  \vspace{-0.2cm}
    \includegraphics[width=13cm]{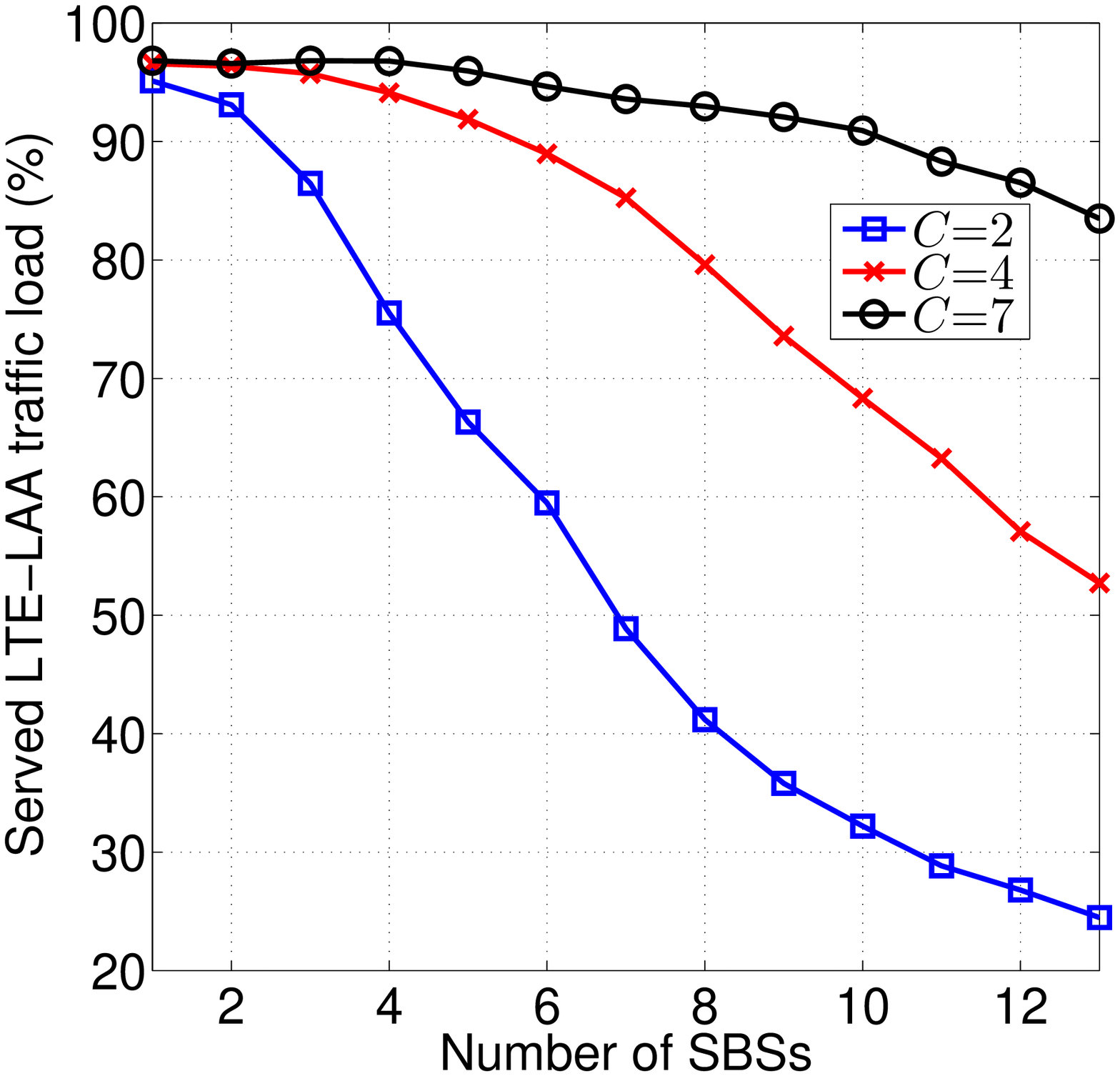}
    %\vspace{-0.2cm}
   \caption{The proportion of LTE-LAA served traffic load as a function
    of the number of SBSs and for different number of unlicensed
    channels ($C=2$, $4$, and $7$).}\label{varying_LTE}
     %\vspace{-0.55cm}
     \vspace{-1.2cm}
  \end{center}
\end{figure}

Fig.~\ref{varying_LTE} investigates the proportion of served LTE-LAA traffic for different network parameters.
%We particularly study scenarios in which the number of SBSs varies from 1 to 7 and for 3 different values of unlicensed channels ($C$=2, 4 and 7).
From Fig.~\ref{varying_LTE}, we can see that, as the number of SBSs increases, the proportion of LTE-LAA served traffic, relative to its corresponding offered load decreases. Moreover, reducing the number of unlicensed channels leads to a decrease in the proportion of LTE-LAA served traffic. Although the number of available unlicensed channels are not players in the game, they affect spectrum allocation action selection for each SBS. As the number of channels increases, the action space for the channel selection vector increases, thus giving more opportunities for an SBS to serve more of its offered load.

Fig.~\ref{priority_ratio} shows the total network served traffic load as well as that of LTE-LAA and WiFi as a function of the priority fairness ratio on the unlicensed band $(P_\mathrm{{LTE}}$/$P_\mathrm{{WiFi}})$ for three different network scenarios considering $T=6$. From Figs.~\ref{priority_ratio} (b) and (c), we can see that more LTE-LAA and less WiFi traffic load is served as $P_\mathrm{{LTE}}$/$P_\mathrm{{WiFi}}$ increases and thus the priority fairness parameters $P_\mathrm{{LTE}}$ and $P_\mathrm{{WiFi}}$ can be regarded as network design parameters that can be adjusted in a way that would avoid LTE-LAA from aggressively offloading traffic to the unlicensed bands. Moreover, from Fig.~\ref{priority_ratio} (a), we can see that the served total network traffic load is maximized at $P_\mathrm{{LTE}}$/$P_\mathrm{{WiFi}}=1$ thus allowing an efficient utilization of the unlicensed spectrum. On the other hand, from Fig.~\ref{priority_ratio} (c), we can see that the served WiFi traffic load for our proposed scheme is greater than or equal to the served WiFi traffic load for the case in which LTE-LAA is replaced by an equivalent WiFi network for values of $P_\mathrm{{LTE}}$/$P_\mathrm{{WiFi}}$ less than $0.8$. From Fig.~\ref{priority_ratio} (c), we can conclude that the WiFi performance for our proposed spectrum sharing scheme, when considering equal weighted airtime share (i.e., $P_\mathrm{{LTE}}$/$P_\mathrm{{WiFi}}=1$), achieves very close performance to the case when only WLAN is operating over the unlicensed spectrum. For instance, the proportion of WiFi served traffic load corresponds to 68\% for the WiFi-LTE scenario as opposed to 70\% for the WiFi-WiFi scenario in the case of 4 SBS and 4 channels. This slight decrease is mainly due to the differences in the MAC layers of both technologies. For instance, LTE adopts a more efficient scheduling mechanism and has less overhead as compared to WiFi. In particular, the DCF protocol of WiFi results in the channel being unused for some period of time and, thus, WiFi should be given a slightly larger priority in that case. In summary, we can deduce that the values of $P_\mathrm{{LTE}}$ and $P_\mathrm{{WiFi}}$ can be regarded as tuning parameters that allow the network operator to achieve a tradeoff between efficiency and fairness.

\begin{figure*}[t!]
  \begin{center}
  \vspace{-0.6cm}
    \includegraphics[width=16cm]{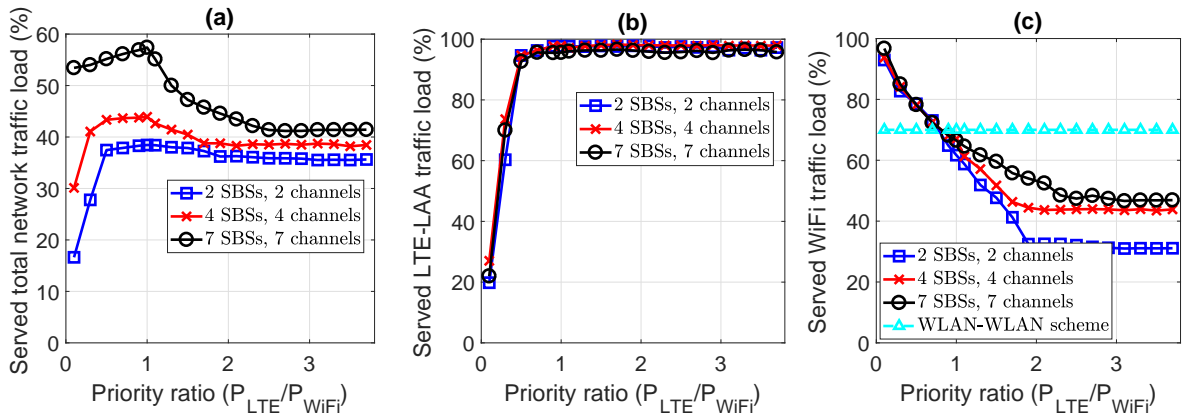}
    %\vspace{-0.2cm}
     \vspace{-1cm}
   \caption{The proportion of the (a) total network served traffic load (b) LTE-LAA served traffic load and (c) WiFi served traffic load as a function of the priority fairness ratio on the unlicensed band, ($\mathrm{P_{LTE}}$/$\mathrm{P_{WiFi}}$). The straight line in (c) represents the proportion of WiFi served traffic load for the case when the LTE network is replaced by an equivalent WiFi network.}\label{priority_ratio}
     \vspace{-1.2cm}
  \end{center}
\end{figure*}

%In fact, given the weighted fairness of different types of radio system is satisfied, efficiency is achieved if each of the radio system's airtime is maximized.

%Efficiency and fairness are obviously the main goals of a spectrum etiquette

% or providing a tradeoff between efficiency and fairness.

%\ucc{Note that at T=6, most of the LTE load is served as shown in Fig. 4}

%\ucc{is it dataset dependent or not?}

%The ratio $P_{LTE}/P_{WiFi}$ controls the level of fairness between LTE and WiFi. and the total served load represents the efficiency of the allocation scheme. tradeoff between both. Therefore, the optimal value of the priority fairness ratio $\mathrm{P_{LTE}}$/$\mathrm{P_{WiFi}}$ is between 0.9 and 1 as it allows roughly equal weighted use of the unlicensed band while maximizing the total network served traffic load.

Fig.~\ref{convergence_learning_rate} shows the average value of airtime allocated to the LTE-LAA network as a function of the number of epochs required for the network to
converge while considering different values for the learning rate. The learning rate determines the step size the algorithm takes to reach the minimizer and
thus has an impact on the convergence rate of our proposed framework. Moreover, an epoch, which consists of multiple iterations, is a single pass through the entire training set, followed by testing of the verification set. From Fig.~\ref{convergence_learning_rate}, we can see that for $\lambda=0.1$, our proposed algorithm requires more than $50$ epochs to approximate the reward function, while, for $\lambda=0.01$, it only needs $20$ epochs.
%In fact, for $\lambda$=0.1, we can see that our proposed algorithm fluctuates around a sub-optimal solution where the average airtime allocated for LTE-LAA is $\sim$ 0.41 as opposed to $\sim$ 0.525 for $\lambda$=0.01. Clearly, a learning rate that is too large can cause the algorithm to fluctuate around the minimum (e.g., $\lambda$=0.05) or even diverge (e.g., $\lambda$=0.1).
In fact, for $\lambda=0.1$, we can see that our proposed algorithm fluctuates around a different region of the optimization space. Clearly, a learning rate that is too large can cause the algorithm to diverge from the optimal solution. This is because too large initial learning rates will decay the loss function faster and thus make the model get stuck at a particular region of the optimization space instead of better exploring it.
%lead the model to a sub-optimal solution at the beginning and get stuck in it instead of better exploring the optimization space.
On the other hand, a learning rate that is too small results in a low speed of convergence. For instance, for $\lambda=0.0001$ and $\lambda=0.00005$, our proposed algorithm requires $\sim40$ epochs to converge. Therefore, although we use an adaptive learning rate approach, the optimization algorithm relies heavily on a good choice of an initial learning rate~\cite{learning_rate_initial}. In other words, the initial value of the learning rate should be within a particular range in order to have good performance. Choosing a proper learning rate is an important key aspect that has an impact on the solution as well as the convergence speed. The optimal value of the initial learning rate is dependent on the dataset under study, where for each dataset, there exists an interval of good learning rates at which the performance does not vary much~\cite{learning_rate}. This in turn necessitates the need for experimental studies in order to search for good problem-specific learning rates~\cite{learning_rate_initial}. A typical range of the learning rate for the dataset under study falls approximately between $0.0005$ and $0.01$, requiring $\sim20$ epochs.

%fluctuate around the global maximum (e.g., $\lambda$=0.05) or even to diverge from the global one (e.g., $\lambda$=0.1)

\begin{figure}[t!]
  \begin{center}
  \vspace{-0.2cm}
    \includegraphics[width=11cm]{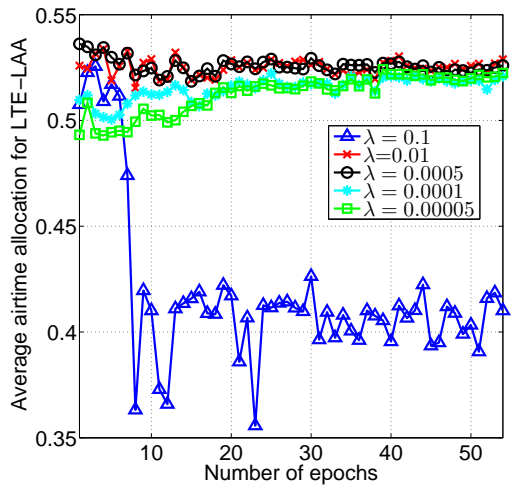}
    %\vspace{-0.5cm}
   \caption{The average airtime allocated for LTE-LAA as a function of the number of epochs for different values of the learning rate.}\label{convergence_learning_rate}
     \vspace{-1.2cm}
  \end{center}
\end{figure}

\vspace{-0.35cm}
\section{Conclusion}
In this paper, we have proposed a novel resource allocation framework for the coexistence of LTE-LAA and WiFi in the unlicensed band. We have formulated a game model where each SBS seeks to maximize its rate over a given time horizon while achieving long-term equal weighted fairness with WLAN and other LTE-LAA operators transmitting on the same channel. To solve this problem, we have developed a novel deep learning algorithm based on LSTMs. The proposed algorithm enables each SBS to decide on its spectrum allocation scheme autonomously with limited information on the network state. Simulation results have shown that the proposed approach yields significant performance gains in terms of rate compared to conventional approaches that considers only instantaneous network parameters such as instantaneous equal weighted fairness, proportional fairness and total network throughput maximization. Results have also shown that our proposed scheme prevents disruption to WLAN operation in the case large number of LTE operators selfishly deploy LTE-LAA in the unlicensed spectrum. For future work, we will consider a multi-mode SBS that operates over the sub-6 GHz licensed band, sub-6 GHz unlicensed band, and the high-frequency mmWave band. Here, we aim at addressing the problem of coexistence of a multi-mode SBS with WiFi on the 5 GHz band and with WiGig on the mmWave band, simultaneously.
\vspace{-0.3cm}

%\subsection{changes for ICC paper}
%ICC paper:
%1) [done] change $N_c$ to C in formulation.
%2) mention that T=1 for Fig. 4
%3) [done] fix the variance of the Gaussian distribution in the text and in the equation as well.
%4) [done] LSTMs are a special kind: change 'is' to 'are'
%5) [done] proof of the proposition. mention first that our utility functions are convex and therefore the gradient descent approach converges to an equilibrium point.

%TO DO:
%1) proposition for other kind of traffic. something different than uniform.
%2) changing some learning parameters and showing what would happen
%3) more in-depth analysis on the DL section
%4) proposition for something like the PoA ratio

%note:
%So what about size of the hidden layer(s)--how many neurons? There are some empirically-derived rules-of-thumb, of these, the most commonly relied on is
%'the optimal size of the hidden layer is usually between the size of the input and size of the output layers'. Jeff Heaton, author of Introduction to Neural
%Networks in Java offers a few more.

%http://stats.stackexchange.com/questions/181/how-to-choose-the-number-of-hidden-layers-and-nodes-in-a-feedforward-neural-netw

%\def\baselinestretch{1.5}
%\bibliographystyle{IEEEtran}

%\bibliography{references}
%\vspace{-0.4cm}
\def\baselinestretch{0.98}
\bibliographystyle{IEEEtran}

\bibliography{references}
\vspace{-0.4cm}
% that's all folks
\end{document}